\newcommand{\Geqt}{\ensuremath{G_\text{eqt}}}
\newcommand{\id}{\ensuremath{\text{id}}}
\newcommand{\seg}{\ensuremath{\text{seg}}}
\newcommand{\pred}{\ensuremath{\text{pred}}}
\renewcommand{\succ}{\ensuremath{\text{succ}}}
\title{Improved Leader Election for Self-Organizing Programmable Matter}
\titlerunning{Improved Leader Election for Programmable Matter}
\author{Joshua J. Daymude\inst{1} \and
        Robert Gmyr\inst{2} \and
        Andr\'ea W.\ Richa\inst{1} \and
        Christian Scheideler\inst{2} \and
        Thim Strothmann\inst{2}}
\authorrunning{Daymude et al.}
\institute{Computer Science, CIDSE, Arizona State University, USA\\
           \email{\{jdaymude,aricha\}@asu.edu}
           \and
           Department of Computer Science, Paderborn University, Germany\\
           \email{\{gmyr,scheidel,thim\}@mail.upb.de}}
\begin{document}

\maketitle

\begin{abstract}
We consider programmable matter that consists of computationally limited devices (called {\em particles}) that are able to self-organize in order to achieve some collective goal without the need for central control or external intervention.
We use the geometric amoebot model to describe such self-organizing particle systems, which defines how particles can actively move and communicate with one another.
In this paper, we present an efficient local-control algorithm which solves the leader election problem in $\mathcal{O}(n)$ asynchronous rounds with high probability, where $n$ is the number of particles in the system.
Our algorithm relies only on local information --- particles do not have unique identifiers, any knowledge of $n$, or any sort of global coordinate system --- and requires only constant memory per particle.
\end{abstract}

\section{Introduction} \label{sec:intro}

The vision for {\em programmable matter} is to create some material or substance that can change its physical properties like shape, density, conductivity, or color in a programmable fashion based on either user input or autonomous sensing of its environment.
Many realizations of programmable matter have been proposed --- including DNA tiles, shape-changing molecules, synthetic cells, and reconfiguring modular robots --- each of which is pursuing solutions applicable to its own situation, subject to domain-specific capabilities and constraints.
We envision programmable matter as a more abstract system of computationally limited devices (which we refer to as \emph{particles}) which can move, bond, and exchange information in order to collectively reach a given goal without any outside intervention.
\emph{Leader election} is a central and classical problem in distributed computing that is very interesting for programmable matter; e.g., most known shape formation techniques for programmable matter suppose the existence of a leader/seed particle (examples can be found in \cite{winfree13} for the nubot model, \cite{RothemundW00} for the abstract tile self assembly model and \cite{Nanocom,SPAA16} for the amoebot model).

In this paper, we present a fully asynchronous local-control protocol for the leader election problem, improving our previous algorithm for leader election in~\cite{DNA} which was only described at a high level, lacking specific rules for each particle's execution. Moreover, while the analysis in~\cite{DNA} used a simplified, synchronous setting and only achieved its linear runtime bound in expectation, here we prove with high probability\footnote{An event occurs \emph{with high probability} (\emph{w.h.p.}), if the probability of success is at least $1 - n^{-c}$, where $c > 1$ is a constant; in our context, $n$ is the number of particles.} correctness\footnote{An astute reader may note that a w.h.p.~guarantee on correctness is weaker than the absolute guarantee given for the algorithm in~\cite{DNA}, but the latter was given without considering the necessary particle-level execution details.} and runtime guarantees for the full local-control protocol. Finally, as this algorithm is both conceptually simpler than that of~\cite{DNA} and intended for each particle to run individually, it is more easily understood and implemented.

\subsection{Amoebot Model} \label{subsec:model}

We represent any structure the particle system can form as a subgraph of the infinite graph $G = (V,E)$, where $V$ represents all possible positions the particles can occupy relative to their structure, and $E$ represents all possible atomic movements a particle can perform
as well as all places where neighboring particles can bond to each other.
In the {\em geometric amoebot model}, we assume that $G = \Geqt$, where $\Geqt$ is the infinite regular triangular grid graph (see Figure~\ref{fig:model}a). We recall the properties of the geometric amoebot model necessary for this algorithm; a full description can be found in~\cite{DNA}.

\begin{figure}[h]
\centering
\includegraphics[width = 0.95\columnwidth]{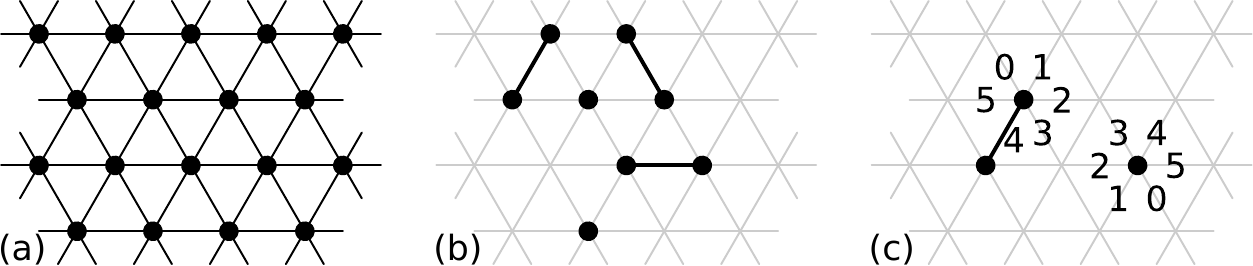}
\caption{(a) A section of $\Geqt$, with black circles depicting $V$. (b) Two contracted particles, depicted as black circles, and three expanded particles, depicted as black circles connected by an edge, on $\Geqt$, depicted as a gray mesh. (c) Two particles occupying non-adjacent positions in $\Geqt$ with different offsets for their head port labellings.}
\label{fig:model}
\end{figure}

Each particle occupies either a single node (i.e., it is \emph{contracted}) or a pair of adjacent nodes in $\Geqt$ (i.e., it is \emph{expanded}), and every node can be occupied by at most one particle (see Figure~\ref{fig:model}b).
Particles move through \emph{expansions} and \emph{contractions}; however, as our leader election algorithm does not require particles to move, we omit a detailed description of these movement mechanisms.

Particles are \emph{anonymous}; they have no unique identifiers.
Instead, each particle has a collection of {\em ports} --- one for each edge incident to the node(s) the particle occupies --- that have unique labels from the particle's local perspective.
We assume that the particles have a common {\em chirality} (i.e., a shared notion of clockwise direction), which allows each particle to label its ports in clockwise order.
However, particles do not have a common sense of global orientation and may have different offsets for their port labels (see, e.g., Figure~\ref{fig:model}c).

Two particles occupying adjacent nodes are connected by a \emph{bond}, and we refer to such particles as \emph{neighbors}. Neighboring particles establish bonds via the ports facing each other. The bonds not only ensure that the particle system forms a connected structure, but also are used for exchanging information.
Each particle has a constant-size local memory that can be read and written to by any neighboring particle. 
Particles exchange information with their neighbors by simply writing into their memory.
A particle always knows whether it is contracted or expanded, and we assume that this information is also available to its neighbors (by publishing it in its local memory).
Due to the constant-size memory constraint, particles know neither the total number of particles in the system nor any estimate of this number.

We assume the standard asynchronous model, wherein particles execute an algorithm concurrently and no assumptions are made about individual particles' activation rates or computation speeds.
A classical result under this model is that for any asynchronous concurrent execution of atomic particle activations, there exists a sequential ordering of the activations which produces the same end configuration, provided conflicts which arise from the concurrent execution are resolved.
Thus, it suffices to view particle system progress as a sequence of \emph{particle activations}; i.e., only one particle is active at a time.
Whenever a particle is activated, it can perform an arbitrary, bounded amount of computation involving its local memory and the memories of its neighbors and can perform at most one movement. We define an \emph{asynchronous round} to be complete once each particle has been activated at least once.

\subsection{Related Work} \label{subsec:relwork}

A variety of work related to programmable matter has recently been proposed and investigated.
One can distinguish between active and passive systems. In passive systems, the computational units either have no intelligence (moving and bonding is based only on their structural properties or interactions with their environment), or have limited computational capabilities but cannot control their movements.
Examples of research on \emph{passive systems} are DNA computing~\cite{Adl94,BDLS96,CDBG11,WLWS98}, tile self-assembly systems (e.g., the surveys in~\cite{doty2012,patitz2014,Woods2013intrinsic}), and population protocols~\cite{AAD+06}.
We will not describe these models in detail as they are of little relevance to our approach.
\emph{Active systems}, on the other hand, are composed of computational units which can control the way they act and move in order to solve a specific task. We discuss prominent examples of active systems here, as they are more comparable to our work.

In the area of \emph{swarm robotics}, it is usually assumed that there is a collection of autonomous robots that can move freely in a given area and have limited sensing, vision, and communication ranges.
They are used in a variety of contexts, including graph exploration (e.g.,~\cite{fl13}), gathering problems (e.g.,~\cite{AG3,ci12}), and shape formation problems (e.g.,~\cite{fl08,kilobots}).
Surveys of recent results in swarm robotics can be found in~\cite{Ker12,McL08}; other samples of representative work can be found in~\cite{AR10,BFMS11,CP08,DFSY10,DS08,HABFM02,KM11}.
While the analytic techniques developed in swarm robotics and natural swarms are of some relevance to this work, the individual units in those systems have more powerful communication and processing capabilities than in the systems we consider.

The field of \emph{modular self-reconfigurable robotic systems} focuses on intra-robotic aspects such as design, fabrication, motion planning, and control of autonomous kinematic machines with variable morphology (e.g.,~\cite{FNKB88,YSS+07}).
\emph{Metamorphic robots} form a subclass of self-reconfigurable robots that share some of the characteristics of our geometric amoebot model~\cite{Chi94}.
Hardware development in the field of self-reconfigurable robotics has been complemented by a number of algorithmic advances (e.g.,~\cite{BKRT04,kilobots,WWA04}), but mechanisms that automatically scale from a few to hundreds or thousands of individual units are still under investigation, and no rigorous theoretical foundation is available yet.

The \emph{nubot} model~\cite{chen2014fast,chen2013parallel,winfree13} by Woods et al.~aims to provide the theoretical framework that would allow for more rigorous algorithmic studies of biomolecular-inspired systems, specifically of self-assembly systems with active molecular components.
While there are similarities between such systems and our self-organizing particle systems, key differences prohibit the translation of the algorithms and other results under the nubot model to our systems; e.g., there is always an arbitrarily large supply of ``extra'' particles that can be added to the nubot system as needed, and a (non-local) notion of rigid-body movement.

The \emph{amoebot} model~\cite{spaa-ba14} is a model for self-organizing programmable matter that aims to provide a framework for rigorous algorithmic research for nano-scale systems.
In~\cite{DNA}, the authors describe a leader election algorithm for an abstract (synchronous) version of the amoebot model that decides the problem in expected linear time.
Recently, a universal shape formation algorithm~\cite{SPAA16}, a universal coating algorithm~\cite{DNA16} and a Markov chain algorithm for the compression problem~\cite{PODCSarah} were introduced, showing that there is potential to investigate a wide variety of problems under this model.

\subsection{Problem Description} \label{subsec:problem}

We consider the classical problem of \emph{leader election}.
An algorithm is said to solve the leader election problem if for any connected particle system of initially contracted particles with empty memories, eventually a single particle \emph{irreversibly} declares itself the \emph{leader} (e.g., by setting a dedicated bit in its memory) and no other particle ever declares itself to be the leader.
We define the running time of a leader election algorithm to be the number of asynchronous rounds until a leader is declared.
Note that we do not require the algorithm to terminate for particles other than the leader. We investigate several variants of the leader election problem in Section~\ref{app:variants}, such as allowing initial configurations which contain expanded particles or requiring the algorithm to terminate for all particles.

\subsection{Our Contributions} \label{subsec:contribs}

In this paper, we introduce a randomized leader election algorithm for programmable matter that requires $\mathcal{O}(n)$ asynchronous rounds with high probability, where $n$ is the number of particles in the system.
We present our algorithm in Section~\ref{sec:algo} and analyze its correctness and runtime in Section~\ref{sec:analysis}.
In Section~\ref{app:variants}, we show how to adapt the algorithm such that $(i)$ all non-leader particles are in a specific non-leader state once the leader has been elected and $(ii)$ we achieve a success probability of $1$.

This new leader election algorithm has the following advantages over our algorithm in~\cite{DNA}: $(i)$ we describe a full local-control protocol as opposed to merely sketching how a high level protocol could be executed by individual particles, $(ii)$ correctness and runtime guarantees are proven for the local-control protocol as opposed to doing so only for a simplified synchronous version of the algorithm as in~\cite{DNA}, and $(iii)$ the asynchronous runtime guarantees hold with high probability as opposed to just in expectation, which was the case for the aforementioned synchronous version.

\section{Algorithm} \label{sec:algo}

Before we describe the leader election algorithm in detail, we give a short high-level overview.
The algorithm consists of six \emph{phases}.
These phases are not strictly synchronized among each other, i.e., at any point in time, different parts of the particle system may execute different phases.
Furthermore, a particle can be involved in the execution of multiple phases at the same time.
The first phase is \emph{boundary setup} (Section~\ref{subsec:boundsetup}).
In this phase, each particle locally checks whether it is part of a \emph{boundary} of the particle system.
Only the particles on a boundary participate in the leader election.
Particles occupying a common boundary organize themselves into a directed cycle.
The remaining phases operate on each boundary independently.
In the \emph{segment setup} phase (Section~\ref{subsec:segsetup}), the boundaries are subdivided into \emph{segments}: each particle flips a fair coin.
Particles that flip heads become \emph{candidates} and compete for leadership whereas particles that flip tails become \emph{non-candidates} and assist the candidates in their competition.
A segment consists of a candidate and all subsequent non-candidates along the boundary up to the next candidate.
The \emph{identifier setup} phase (Section~\ref{subsec:idsetup}) assigns a random identifier to each candidate.
The identifier of a candidate is stored distributively among the particles of its segment.
In the \emph{identifier comparison} phase (Section~\ref{subsec:idcomp}), the candidates compete for leadership by comparing their identifiers using a token passing scheme.
Whenever a candidate sees an identifier that is higher than its own, it revokes its candidacy.
Whenever a candidate sees its own identifier, the \emph{solitude verification} phase (Section~\ref{subsec:solitudever}) is triggered.
In this phase, the candidate checks whether it is the last remaining candidate on the boundary.
If so, it initiates the \emph{boundary identification} phase (Section~\ref{subsec:boundid}) to determine whether it occupies the unique \emph{outer boundary} of the system.
In that case, it becomes the leader; otherwise, it revokes its candidacy.

\subsection{Boundary Setup} \label{subsec:boundsetup}

The boundary setup phase organizes the particle system into a set of \emph{boundaries}. This approach is directly adopted from~\cite{DNA}, but we give a full description here to introduce important notation.
Let $A \subset V$ be the set of nodes in $\Geqt = (V,E)$ that are occupied by particles.
According to the problem definition, the subgraph $\Geqt|_A$ of $\Geqt$ induced by $A$ is connected.
Consider the graph $\Geqt|_{V \setminus A}$ induced by the unoccupied nodes in $\Geqt$.
We call a connected component $R$ of $\Geqt|_{V \setminus A}$ an \emph{empty region}.
Let $N(R)$ be the neighborhood of an empty region $R$ in $\Geqt$; that is, $N(R) = \{u \in V \setminus R : \exists v \in R \text{ such that } \{u,v\} \in E\}$.
Note that by definition, all nodes in $N(R)$ are occupied by particles.
We refer to $N(R)$ as the \emph{boundary} of the particle system corresponding to $R$.
Since $\Geqt|_A$ is a finite graph, exactly one empty region has infinite size while the remaining empty regions have finite size.
We define the boundary corresponding to the infinite empty region to be the unique \emph{outer boundary} and refer to a boundary that corresponds to a finite empty region as an \emph{inner boundary}.

For each boundary of the particle system, we organize the particles occupying that boundary into a directed cycle.
Upon first activation, each particle instantly determines its place in these cycles using only local information.
Figure~\ref{fig:boundsetup} shows all possible neighborhoods of a particle (up to rotation) and the corresponding results of the boundary setup phase.
To produce the depicted results, a particle $p$ proceeds as follows.
First, $p$ checks for two special cases. If $p$ has no neighbors, it must be the only particle in the particle system since the particle system is connected. Thus, it immediately declares itself the leader and terminates. If all neighboring nodes of $p$ are occupied, $p$ is not part of any boundary and terminates without participating in the leader election process any further.

\begin{figure}[tbh]
\centering
\includegraphics[scale=0.6]{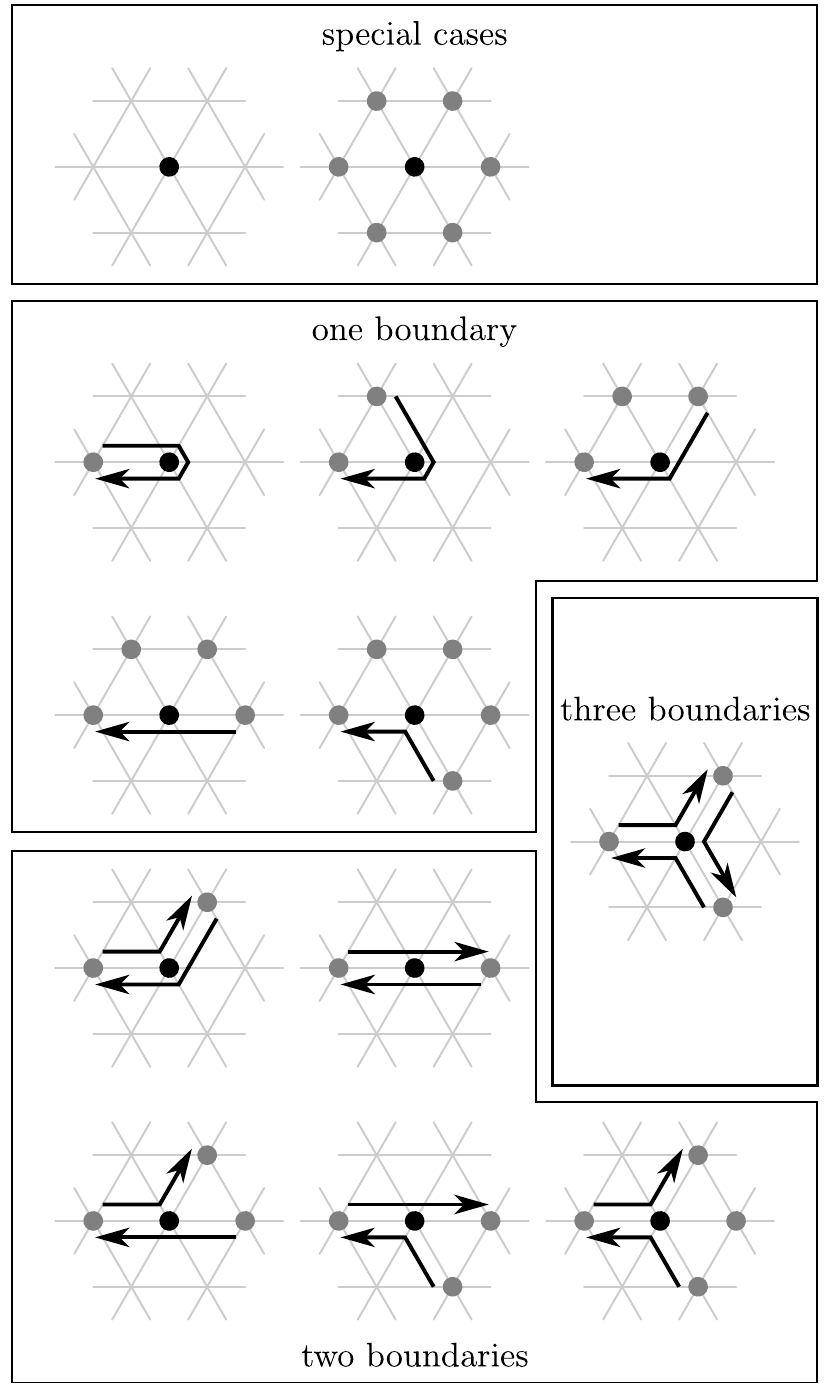}
\caption{Possible results (up to rotation) of the boundary setup phase depending on the neighborhood of a particle. In the special cases shown in the top-most part of the figure the particle does not consider itself part of any boundary. The remainder of the figure is organized according to the number of boundaries a particle is part of from its local perspective. For each boundary, the figure shows an arrow that starts at the predecessor and ends at the successor of the particle on that boundary.}
\label{fig:boundsetup}
\end{figure}

If these special cases do not apply, then $p$ has at least one occupied node and one unoccupied node in its neighborhood.
Interpret the neighborhood of $p$ as a directed ring of six nodes that is oriented clockwise around $p$.
Consider all maximal sequences of unoccupied nodes $(v_1, \ldots v_k)$ in this ring; call such a sequence an \emph{empty sequence}.
Such a sequence is part of some empty region and hence corresponds to a boundary that includes $p$.
Let $v_0$ be the node before $v_1$ and let $v_{k+1}$ be the node after $v_k$ in the ring.
Note that we might have $v_0 = v_{k+1}$.
By definition, $v_0$ and $v_{k+1}$ are occupied.
Particle $p$ implicitly arranges itself as part of a directed cycle spanning the aforementioned boundary by considering the particle occupying $v_0$ to be its \emph{predecessor} and the particle occupying $v_{k+1}$ to be its \emph{successor} on that boundary.
It repeats this process for each empty sequence in its neighborhood.

A particle can have up to three empty sequences in its neighborhood (see Figure~\ref{fig:boundsetup}), and consequently can be part of up to three distinct boundaries.
However, a particle cannot locally decide whether two distinct empty sequences belong to two distinct empty regions or to the same empty region.
To guarantee that the executions on distinct boundaries are isolated, we let the particles treat each empty sequence as a distinct empty region.
For each such sequence, a particle acts as a distinct \emph{agent} which executes an independent instance of the algorithm encompassing the remaining five phases of the leader election algorithm.
Whenever a particle is activated, it sequentially executes the independent instances of the algorithm for each of its agents in an arbitrary order, i.e., whenever a particle is activated also its agents are activated.
Each agent $a$ is assigned the predecessor and successor --- denoted $a.\pred$ and $a.\succ$, respectively --- that was determined by the particle for its corresponding empty sequence.
This organizes the set of all agents into disjoint cycles spanning the boundaries of the particle system (see Figure~\ref{fig:agents}).
As consequence of this approach, a particle can occur up to three times on the same boundary as different agents.
While we can ignore this property for most of the remaining phases, it will remain a cause for special consideration in the solitude verification phase (Section~\ref{subsec:solitudever}).

\begin{figure}
\centering
\includegraphics{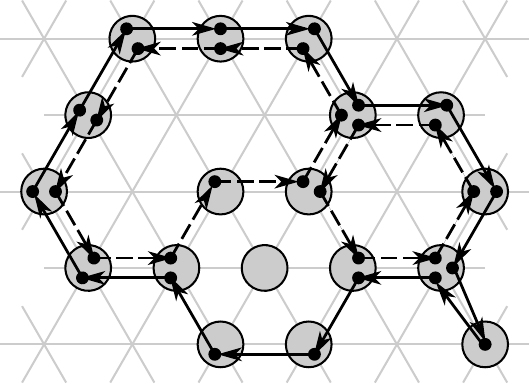}
\caption{Boundaries and agents. Particles are depicted as gray circles and the agents of a particle are depicted as black dots inside of the corresponding circle. After the boundary setup phase, the agents form disjoint cycles that span the boundaries of the particle system. The solid arrows represent the unique outer boundary and the dashed arrows represent the two inner boundaries.}
\label{fig:agents}
\end{figure}

\subsection{Segment Setup} \label{subsec:segsetup}

All remaining phases (including this one) operate exclusively on boundaries, and furthermore execute on each boundary independently.
Therefore, we only consider a single boundary for the remainder of the algorithm description.
The goal of the segment setup phase is to divide the boundary into disjoint ``segments''.
Each agent flips a fair coin.
The agents which flip heads become \emph{candidates} and the agents which flip tails become \emph{non-candidates}.
In the following phases, candidates compete for leadership while non-candidates assist the candidates in their competition.
A \emph{segment} is a maximal sequence of agents $(a_1, a_2, \ldots, a_k)$ such that $a_1$ is a candidate, $a_i$ is a non-candidate for $i > 1$, and $a_i = a_{i-1}.\succ$ for $i > 1$.
Note that the maximality condition implies that the successor of $a_k$ is a candidate.
We refer to the segment starting at a candidate $c$ as $c.\seg$ and call it the segment of $c$.
In the following phases, each candidate uses its segment as a distributed memory.

\subsection{Identifier Setup} \label{subsec:idsetup}

After the segments have been set up, each candidate generates a random \emph{identifier} by assigning a random digit to each agent in its segment.
The candidates use these identifiers in the next phase to engage in a competition in which all but one candidate on the boundary are eliminated.
Note that the term identifier is slightly misleading in that two distinct candidates can have the same identifier.
Nevertheless, we hope that the reader agrees that the way these values are used makes this term an appropriate choice.

To generate a random identifier, a candidate $c$ sends a \emph{token} along its segment in the direction of the boundary.
A token is simply a constant-size piece of information that is passed from one agent to the next by writing it to the memory of a neighboring particle.
Throughout the leader election algorithm, a particle holds at most a constant number of tokens at all times, so the constant-size memory of the particles is sufficient to implement a token-passing mechanism.
While the token traverses the segment, it assigns a value chosen uniformly at random from $[0, r - 1]$ to each visited agent where $r$ is a constant that is fixed in the analysis.
The identifier generated in this way is a number with radix $r$ consisting of $|c.\seg|$ digits where $c$ holds the most significant digit and the last agent of $c.\seg$ holds the least significant digit.
We refer to the identifier of a candidate $c$ as $c.\id$.
The competition in the next phase of the algorithm is based on comparing identifiers.
When comparing identifiers of different lengths, we define the shorter identifer to be lower than the longer identifier.

After generating its random identifier, each candidate creates a copy of its identifier that is stored in \emph{reversed digit order} in its segment.
This step is required as a preparation for the next phase.
To achieve this, we use a single token that moves back and forth along the segment and copies one digit at a time.
More specifically, we reuse the token described above that generated the random identifier.
Once this token reaches the end of the segment, it starts copying the identifier by reading the digit of the last agent of the segment and moving to the beginning of the segment.
There, it stores a copy of that digit in the candidate $c$.
It then reads the digit of $c$ and moves back to the end of the segment where it stores a copy of that digit in the last agent of the segment.
It proceeds in a similar way with the second and the second to last agent and so on until the identifier is completely copied.
Afterwards, the token moves back to $c$ to inform the candidate that the identifier setup is complete.

Note that for ease of presentation we deliberately opted for simplicity over speed when creating a reversed copy of the identifier.
As we will show in Section~\ref{subsec:runningtime}, the running time of this simple algorithm is dominated by the running time of the next phase so that the overall asymptotic running time of the leader election algorithm does not suffer.

\subsection{Identifier Comparison} \label{subsec:idcomp}

During the identifier comparison phase the agents use their identifiers to compete with each other.
Each candidate compares its own identifier with the identifier of every other candidate on the boundary.
A candidate with the highest identifier eventually progresses to the solitude verification phase, described in the next section, while any candidate with a lower identifier withdraws its candidacy.
To achieve the comparison, the non-reversed copies of the identifiers remain stored in their respective segments while the reversed copies move backwards along the boundary as a sequence of tokens.
More specifically, a \emph{digit token} is created for each digit of a reversed identifier.
A digit token created by the last agent of a segment is marked as a \emph{delimiter token}.
Once created, the digit tokens traverse the boundary against the direction of the cycle spanning it.
Each agent is allowed to hold at most two tokens at a time, which gives the tokens some space to move along the boundary.
The tokens are not allowed to overtake each other, so whenever an agent stores two tokens, it keeps track of the order they were received in and forwards them accordingly.
An agent forwards at most one token per activation.
Furthermore, an agent can only receive a token after it creates its own digit token.
We define the \emph{token sequence} of a candidate $c$ as the sequence of digit tokens created by the agents in $c.\seg$.
Note that according to the rules for forwarding tokens, the token sequences of distinct candidates remain separated and the tokens within a token sequence maintain their relative order along the boundary.

Whenever a token sequence traverses a segment $c.\seg$ of a candidate $c$, the agents in $c.\seg$ cooperate with the tokens of the token sequence to compare the identifier $c.\id$ with the identifier stored in the token sequence.
This comparison has three possible outcomes: $(i)$ the token sequence is longer than $c.\seg$ or the lengths are equal and the token sequence stores an identifier that is strictly greater than $c.\id$, $(ii)$ the token sequence is shorter than $c.\seg$ or the lengths are equal and the token sequence stores an identifier that is strictly smaller than $c.\id$, or $(iii)$ the lengths are equal and the identifiers are equal.
In the first case, $c$ does not have the highest identifier and withdraws its candidacy.
In the second case, $c$ might be a candidate with the highest identifier and therefore remains a candidate.
Finally, in the third case $c$ initiates the solitude verification phase, which is then executed in parallel to the identifier comparison phase.
Solitude verification might be triggered quite frequently, especially for candidates with short segments; we describe how this is handled in the next section.

To describe the token passing scheme for identifier comparison, consider a candidate $c$ and the next candidate $c'$ along the boundary at the beginning of the identifier comparison phase. As token sequences move backwards along the boundary, the token sequence of $c'$ will be compared with $c.\id$.
Both agents and tokens can be either \emph{active} or \emph{inactive}.
Agents are initially active while tokens are initially inactive.
When an active agent receives an active token, both become inactive and we say the agent and the token \emph{match}.
Since the tokens in the token sequence of $c'$ are initially inactive, they are forwarded by the agents of $c'.\seg$ without matching.
Whenever a token is forwarded by a candidate into a new segment, the token becomes active.
Therefore, the tokens in the token sequence of $c'$ are active when they enter $c.\seg$.
When an agent matches with a token, it compares its digit of the \emph{non-reversed} identifier with the digit stored in the token and keeps the result of the comparison for future reference.
Note that since the digits of $c'.\id$ are stored in \emph{reversed order} in the token sequence of $c'$, the agent holding the least significant digit of $c.\id$ matches with the token holding the least significant digit of $c'.\id$, the agent holding the second to least significant digit of $c.\id$ matches with the token holding the second to least significant digit of $c'.\id$, and so on.

The lengths of the identifiers are compared as follows.
Recall that the digit token generated by the last agent of a segment is marked as a delimiter token.
If the delimiter token of $c'$ matches with $c$, the token sequence of $c'$ has the same length as $c.\seg$.
If the delimiter token of $c'$ matches with another agent of $c.\seg$ (and is therefore already inactive when it reaches $c$), the token sequence of $c'$ is shorter than $c.\seg$.
Finally, if a non-delimiter token of $c'$ matches with $c$, the token sequence of $c'$ is longer than $c.\seg$.
Consequently, $c$ can distinguish these three cases once it receives the delimiter token of $c'$.

If the token sequence of $c'$ has the same length as $c.\seg$, $c$ has to compare its identifier with the identifier stored in the token sequence of $c'$.
As described above, the digits of the respective identifiers are compared in the correct order and the results of the comparisons are stored distributively by the agents of $c.\seg$.
When the delimiter token of $c'$ traverses $c.\seg$, it keeps track of the comparison result of the most significant digit for which the identifiers differ.
It can do so because during its traversal it sees the consecutive digit-wise comparisons going from the least significant digit to the most significant digit.
Once $c$ receives the delimiter token of $c'$, $c$ can use the information stored within the token to decide whether the identifiers are equal or, if not, which identifier is greater.

It remains to describe how the agents and tokens are prepared for subsequent comparisons.
Specifically, we have to define when inactive agents and tokens become active again and when the comparison results stored in the agents are deleted.
As described above, an inactive token becomes active when it is forwarded by a candidate into a new segment.
The remaining tasks are the responsibility of the delimiter tokens: when an agent receives a delimiter token, it executes the computations described above and then deletes its comparison result and becomes active again.

Finally, note that a candidate that withdrew its candidacy still takes part in the identifier comparison to a certain extent.
Since the agents in its segment still match with incoming tokens, the candidate has to keep activating these tokens when it forwards them to the segment of the preceding candidate.
However, candidates that withdrew their candidacy will never progress to solitude verification and are treated as non-candidates in the solitude verification phase.

\subsection{Solitude Verification} \label{subsec:solitudever}

The goal of the solitude verification phase is for a candidate $c$ to check whether it is the last remaining candidate on its boundary.
Solitude verification is triggered during the identifier comparison phase whenever a candidate detects equality between its own identifier and the identifier of a token sequence that traversed its segment.
Note that such a token sequence can either be the token sequence created by $c$ itself or the token sequence created by some other candidate that generated the same identifier.
Once the solitude verification phase is started, it runs in parallel to the identifier comparison phase and does not interfere with it.

A candidate $c$ can check whether it is the last remaining candidate on its boundary by determining whether or not the next candidate in direction of the cycle is $c$ itself.
To achieve this, the solitude verification phase has to span not only $c.\seg$ but also all subsequent segments of former candidates that already withdrew their candidacy during the identifier comparison phase.
We refer to the union of these segments as the \emph{extended segment} of $c$.
The basic idea of the algorithm is the following.
We treat the edges that connect the agents on the boundary as vectors in the two-dimensional Euclidean plane.
If $c$ is the last remaining candidate on its boundary, the vectors corresponding to the directed edges of the boundary cycle in the extended segment of $c$ and the next edge (connecting the extended segment of $c$ to the next candidate) sum to the zero vector, implying that the next candidate and $c$ occupy the same node.
To perform this summation in a local manner, $c$ locally defines a two-dimensional coordinate system (e.g., the coordinate system depicted in Figure~\ref{fig:coordboundbox}) and uses two token passing schemes to generate and sum the $x$ and $y$ coordinates of these vectors in parallel.
We only describe the token passing scheme for the $x$-coordinates, as the scheme for the $y$-coordinates works analogously.

\begin{figure}
\centering
\includegraphics{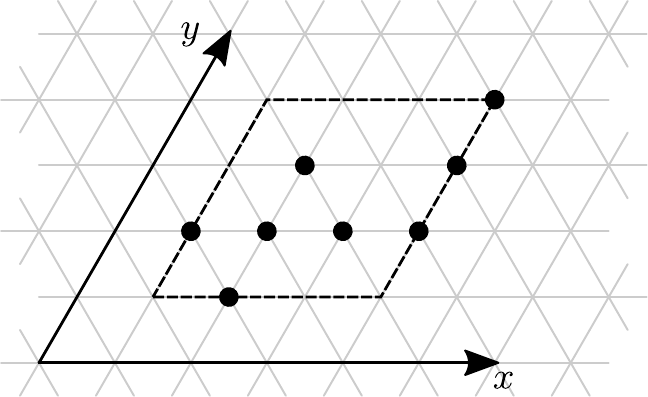}
\caption{An example of a coordinate system and bounding box of a particle system.}
\label{fig:coordboundbox}
\end{figure}

First, $c$ sends an \emph{activation token} along its extended segment to the next candidate.
Whenever the token moves \emph{right} (i.e., in \emph{positive} direction of the local $x$-axis defined by $c$), it creates a \emph{positive token} that is sent back along the boundary towards $c$.
Whenever the token moves \emph{left} (i.e., in \emph{negative} direction of the local $x$-axis), it creates a \emph{negative token} that is also sent back towards $c$.
The positive and negative tokens move independently of each other.
However, a token of either type cannot overtake another token of the same type.
Each agent can hold at most two tokens of each type.
The tokens do not move beyond $c$.

Consider a positive token $t$.
Eventually, $t$ reaches an agent $a$ such that the agents from $c$ to $a.\pred$ all hold two positive tokens.
At this point, $t$ cannot move any closer to $c$ and we say $t$ has \emph{settled}.
Each token holds a bit that specifies whether it has settled which is initially false.
It is set to true if the token reaches $c$ or if the token is held by an agent $a$ such that $a.\pred$ already holds two settled positive tokens.
The negative tokens use the same mechanism to detect whether they are settled.
Observe that once all tokens of a specific type have settled, they form a consecutive sequence whose length corresponds to the number tokens.

After the activation token completely traverses the extended segment of $c$ and reaches the next candidate, it moves back towards $c$ while staying behind the positive and negative tokens.
When it first encounters an agent $a$ in front of it that holds a settled token, it moves to $a$ and waits until all tokens at $a$ are either forwarded or have settled.
At this point, the observation from the last paragraph implies the following property:
The total number of positive tokens equals the total number of negative tokens if and only if the number of settled positive tokens at $a$ equals the number of settled negative tokens at $a$.
Therefore, the activation token can locally decide whether the $x$-coordinate of the summed vector is zero or not.
The activation token then moves back to $c$ and reports the result, deleting all positive and negative tokens it encounters along the way.
Once $c$ receives the result for both the $x$ and the $y$ coordinate, it knows whether the summed vector equals the zero vector and thus can decide whether the next candidate along the boundary is itself.

However, this is not sufficient to decide whether $c$ is the last remaining candidate on the boundary.
As described in Section~\ref{subsec:boundsetup}, a particle can occur up to three times as different agents on the same boundary.
Therefore, there can be distinct agents on the same boundary that occupy the same node of $\Geqt$.
If an extended segment reaches from one of these agents to another, the vectors induced by the extended segment sum up to the zero vector even though there are at least two agents left on the boundary.
To handle this case, each particle assigns a locally unique agent identifier from $\{1, 2, 3\}$ to each of its agents in an arbitrary way.
When the activation token reaches the end of the extended segment, it reads the agent identifier of the candidate at the end of the extended segment and carries this information back to $c$.
It is not hard to see that $c$ is the last remaining candidate on the boundary if and only if the vectors sum to the zero vector and the agent identifier stored in the activation token equals the agent identifier of $c$.

Finally, we must address the interaction between the solitude verification phase and the identifier comparison phase.
As noted in the previous section, solitude verification may be triggered quite frequently.
Therefore, it may occur that solitude verification is triggered for a candidate $c$ while $c$ is still performing a previously triggered execution of solitude verification.
In this case, $c$ simply continues with the already ongoing execution and ignores the request for another execution.
Furthermore, $c$ might be eliminated by the identifier comparison phase while it is performing solitude verification.
In this case, $c$ waits for the ongoing solitude verification to finish and only then withdraws its candidacy.

\subsection{Boundary Identification} \label{subsec:boundid}

Once a candidate $c$ determines that it is the only remaining candidate on its boundary, it initiates the boundary identification phase to check whether or not it lies on the unique outer boundary of the particle system.
If it lies on the outer boundary, the particle acting as candidate agent $c$ declares itself the leader.
Otherwise, $c$ revokes its candidacy.
To achieve this, we make use of the observation that the outer boundary is oriented clockwise while an inner boundary is oriented counter-clockwise (see Figure~\ref{fig:agents}), a property resulting directly from the way the an agent's predecessor and successor are defined in Section~\ref{subsec:boundsetup}.

A candidate $c$ can distinguish between clockwise and counter-clockwise oriented boundaries using a simple token passing scheme.
It sends a token along the boundary that sums up the angles of the turns it takes according to Figure~\ref{fig:boundid}, storing the results in a counter $\alpha$.
When the token returns to $c$, the absolute value $|\alpha|$ represents the external angle of the polygon induced by the boundary.
It is well-known that the external angle of polygon in the Euclidean plane is $|\alpha| = 360^\circ$.
Since the outer boundary is oriented clockwise and an inner boundary is oriented counter-clockwise, we have $\alpha = 360^\circ$ for the outer boundary and $\alpha = -360^\circ$ for an inner boundary.
The token can encode $\alpha$ as an integer $k$ such that $\alpha = k \cdot 60^\circ$.
To distinguish the two possible final values of $k$ it is sufficient to store $k$ modulo $5$ so that we have $k = 1$ for the outer boundary and $k = 4$ for an inner boundary.
Therefore, the token only needs three bits of memory.

\begin{figure}
\centering
\includegraphics{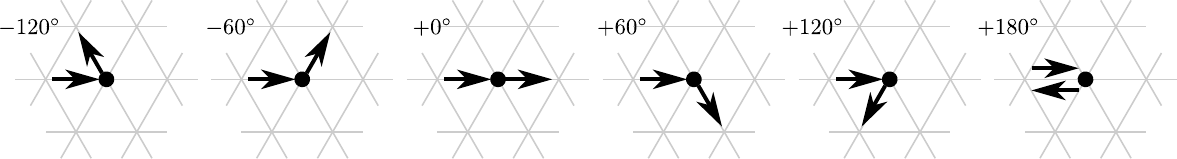}
\caption{Determining the external angle $\alpha$. The incoming and outgoing arrows represent the directions in which the token enters and leaves an agent, respectively. Only the angle between the arrows is relevant; the absolute global direction of the arrows cannot be detected by the agents since they do not posses a common compass.}
\label{fig:boundid}
\end{figure}

\section{Analysis} \label{sec:analysis}

We now turn to the analysis of the leader election algorithm.
We show its correctness in Section~\ref{subsec:correctness} and analyze its running time in Section~\ref{subsec:runningtime}.

\subsection{Correctness} \label{subsec:correctness}

To show the correctness of the algorithm we must prove that eventually a single particle irreversibly declares itself to be the leader of the particle system and no other particle ever declares itself to be the leader.
Any agent on an inner boundary can never cause its particle to become the leader; even if the algorithm reaches the point at which there is exactly one candidate $c$ on some inner boundary, $c$ will withdraw its candidacy in the boundary identification phase.
Therefore, we can focus exclusively on the behavior of the algorithm on the unique outer boundary.

We first show a sequence of lemmas to establish that with high probability there is a unique candidate that has an identifier that is strictly greater than the identifier of every other candidate.
Let $n$ be the number of particles in the particle system and let $L$ be the length of the outer boundary, i.e., the number of agents in the cycle spanning the outer boundary.
We have the following lower bound on $L$.

\begin{lemma} \label{lem:outerboundlen}
$L \ge \sqrt{n}$.
\end{lemma}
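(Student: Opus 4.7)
The plan is to use a bounding-box argument in the axial coordinates of the triangular grid $\Geqt$. Fix axial coordinates $(i,j)$ so that the six neighbor displacements of any node are $(\pm 1, 0)$, $(0, \pm 1)$, and $\pm(1,-1)$, meaning that in particular every edge of $\Geqt$ changes the $i$-coordinate of its endpoints by at most $1$ in absolute value. Let $i_{\min}, i_{\max}, j_{\min}, j_{\max}$ denote the extreme $i$- and $j$-coordinates over the occupied nodes, and write $w := i_{\max}-i_{\min}+1$ and $h := j_{\max}-j_{\min}+1$. Since every particle lies in the coordinate rectangle $[i_{\min},i_{\max}] \times [j_{\min},j_{\max}]$, we immediately get $n \le wh$.

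Next I would show that the outer boundary cycle visits particles attaining each of these four extreme values. Pick any particle $p$ with $i$-coordinate equal to $i_{\max}$: both of its neighbors in the $+i$-direction have $i$-coordinate $i_{\max}+1$ and are therefore unoccupied, and the empty region containing them includes the entire halfspace $\{i > i_{\max}\}$, so is infinite and coincides with the outer empty region. Hence $p$ contributes an empty sequence whose corresponding agent lies on the outer boundary cycle. Symmetric arguments handle $i_{\min}$, $j_{\min}$, and $j_{\max}$.

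By the definition of $a.\pred$ and $a.\succ$ in Section~\ref{subsec:boundsetup}, consecutive agents on the outer boundary cycle occupy adjacent nodes of $\Geqt$, so the cycle induces a closed walk in $\Geqt$ of length exactly $L$. Splitting this walk at one agent with $i=i_{\max}$ and another with $i=i_{\min}$ produces two arcs whose net $i$-displacements have magnitude $w-1$; since each edge changes $i$ by at most $1$, each arc contains at least $w-1$ edges, giving $L \ge 2(w-1)$. The same argument with $j$ in place of $i$ yields $L \ge 2(h-1)$. Combining, $w,h \le L/2+1$ and therefore $n \le wh \le (L/2+1)^2$, which rearranges to $L \ge 2\sqrt{n}-2 \ge \sqrt{n}$ whenever $n \ge 4$. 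The remaining cases $n \in \{1,2,3\}$ are handled by directly enumerating the few possible connected configurations; in each of them one verifies $L \ge n \ge \sqrt{n}$.

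The step that needs the most care is the total-variation bound $L \ge 2(w-1)$: because a particle can appear as up to three distinct agents on the same boundary and the induced walk in $\Geqt$ need not be simple, one must avoid reasoning about distinct particles and instead view the $i$-coordinate as a function on the cyclic sequence of agents, whose total variation along the closed walk is bounded by the number of edges. Once that is phrased correctly, the excursion calculation and the small-case check are routine.
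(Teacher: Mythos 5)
Your proof is correct and reaches the paper's conclusion, but the key counting step is genuinely different. Both arguments begin with the bounding-box inequality $n \le wh$. The paper then takes the longer side, say the horizontal one of length $w \ge \sqrt{n}$, observes that the topmost particle of each of the $w$ columns sits below an infinite unoccupied ray and hence lies on the outer boundary, and concludes $L \ge w \ge \sqrt{n}$ by counting one distinct boundary particle per column. You instead lower-bound $L$ by the total variation of a coordinate along the closed walk induced by the outer boundary cycle, getting $L \ge 2(w-1)$ and $L \ge 2(h-1)$ and hence $L \ge 2\sqrt{n}-2$, which is a slightly stronger bound for large $n$ and, as you note, is manifestly robust to a particle occurring as up to three agents (the paper's count is also safe, since counting distinct particles only underestimates the number of agents). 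The costs of your route are that you must argue that all four extremal agents lie on the single cycle spanning the outer boundary (the paper's proof needs only membership in $N(R)$) and that you need a separate small-$n$ check. One caveat there: for $n=1$ the algorithm special-cases the lone particle and creates no agents at all, so $L=0$ and your claim ``$L \ge n$'' fails in that case; however, the lemma as stated in the paper has the same blemish, and its only application (Lemma~\ref{lem:seglen}) explicitly excludes $n=1$, so this is harmless.
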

\begin{proof}
Define a coordinate system by picking an arbitrary node of $\Geqt$ as the origin and orienting the $x$ and $y$ axes as depicted in Figure~\ref{fig:coordboundbox}.
Consider the \emph{bounding box} of the particle system, i.e., the minimal parallelogram containing all particles (see Figure~\ref{fig:coordboundbox}).
We define the length of a side of this bounding box as the number nodes it spans.
Since the particle system is connected and contains $n$ particles, one of the sides of the bounding box has to be of length at least $\sqrt{n}$.
Suppose that the top and bottom sides of the bounding box are at least as long as the left and right sides (the opposite case is analogous).
Define a \emph{column} of the particle system to be a maximal set of particles with the same $x$-coordinate.
The particle with the greatest $y$-coordinate in each column is part of the outer boundary.
Since the number of columns equals the length of the top side, there are at least $\sqrt{n}$ agents on the outer boundary.
\qed
\end{proof}

The next lemma gives us a lower bound on the length of the longest segment which is equal to the number of digits in the longest identifier.

\begin{lemma} \label{lem:seglen}
For $n$ sufficiently large, there is a segment of length at least $0.25 \log n$, w.h.p.
\end{lemma}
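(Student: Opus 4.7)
The plan is to invoke Lemma~\ref{lem:outerboundlen} to get $L \geq \sqrt{n}$ agents on the outer boundary, and then apply a standard longest-run-type argument to the independent fair coin flips these agents perform in the segment setup phase. Recall that a segment of length $k$ corresponds exactly to a pattern of one candidate (heads) followed by $k-1$ non-candidates (tails), so it suffices to show that such a pattern $H T^{k-1}$ with $k = \lceil 0.25 \log n \rceil$ appears somewhere along the cycle, w.h.p.

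Concretely, I would partition the $L$ agents of the outer boundary into $m = \lfloor L/k \rfloor$ disjoint consecutive blocks of size $k$ (discarding any leftover agents at the end). For the $i$-th block let $E_i$ be the event that the first agent of the block flips heads and the remaining $k - 1$ agents flip tails. Since each agent flips an independent fair coin, $\Pr[E_i] = (1/2)^k \geq \tfrac{1}{2} n^{-1/4}$ (taking logarithms base $2$), and the events $E_1, \ldots, E_m$ involve disjoint sets of coin flips and are therefore mutually independent. Note that whenever $E_i$ occurs, the candidate at the start of block $i$ begins a segment of length at least $k \geq 0.25 \log n$, since the next $k-1$ agents in the block are non-candidates (the segment might in fact extend past the block, which only helps).

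It then remains to bound the probability that none of the $E_i$ occur:
\[
\Pr\!\left[\bigcap_{i=1}^m \overline{E_i}\right] \;\leq\; \bigl(1 - \tfrac{1}{2} n^{-1/4}\bigr)^{m} \;\leq\; \exp\!\left(-\tfrac{1}{2} n^{-1/4} \cdot m\right).
\]
Using $L \geq \sqrt{n}$ and $k = \lceil 0.25 \log n \rceil$, we have $m = \Omega(\sqrt{n}/\log n)$, so the exponent is $-\Omega(n^{1/4}/\log n)$, which grows faster than any $c \log n$. Hence the failure probability is super-polynomially small in $n$ for $n$ sufficiently large, giving the desired w.h.p.\ bound.

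There is no real obstacle here beyond bookkeeping; the only subtlety is making sure the block-based argument really does certify a long segment (which it does, as noted above, because a block of the form $H T^{k-1}$ forces the segment starting at its first agent to have length at least $k$, regardless of what happens in neighboring blocks). The analysis is essentially tight up to constants: had we chosen the constant in $0.25 \log n$ too large, the per-block success probability would drop to $n^{-\Omega(1)}$ with exponent exceeding $1/2$, and the union of $O(\sqrt{n}/\log n)$ blocks would no longer suffice.
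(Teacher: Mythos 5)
Your proof is correct and follows essentially the same block-decomposition argument as the paper: both invoke Lemma~\ref{lem:outerboundlen} to get $L \ge \sqrt{n}$, partition the boundary into $\Theta(\sqrt{n}/\log n)$ disjoint blocks of length $\Theta(\log n)$, and exploit independence across blocks to drive the failure probability down to $e^{-\Omega(n^{1/4}/\log n)}$. The only difference is that you search each block for the pattern $HT^{k-1}$, which directly certifies a candidate starting a long segment, whereas the paper searches for an all-tails block and then separately argues that at least one candidate exists on the boundary; your variant is a slight streamlining of the same idea.
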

\begin{proof}
Consider any agent $a_1$ on the outer boundary and let $a = (a_1, a_2, \ldots, a_L)$, where $a_i = a_{i-1}.\succ$ for $2 \le i \le L$.
Let $b = (b_1, b_2, \ldots, b_L)$, where $b_i = 1$ if the coin flip of $a_i$ came up heads and $b_i = 0$ otherwise.

We first show that $b$ contains a subsequence of length at least $m = 0.25 \log n$ such that all elements of the subsequence are $0$.
Divide $b$ into consecutive subsequences of length $m$.
By Lemma~\ref{lem:outerboundlen}, we get $k = \lfloor L / m \rfloor \ge \lfloor \sqrt{n} / m \rfloor$ such subsequences.
Since the case $n = 1$ is explicitly handled in Section~\ref{subsec:boundsetup}, we can assume $n \ge 2$, implying that $m \neq 0$ and $k$ is well-defined.
Let $b^{(i)} = (b_{(i - 1) m + 1}, \ldots, b_{im})$ be the $i$-th such subsequence where $1 \le i \le k$.
Let $E_i$ be the event that all elements of $b^{(i)}$ are $0$.
We have $\Pr[E_i] = (1 / 2)^m = n^{-1/4}$.
Since the events $E_i$ are independent, we have:
\[\Pr\left[\bigcap_{i = 1}^k \overline{E_i} \right]
  = \left(1 - n^{-1/4}\right)^k
  \le \left(1 - n^{-1/4}\right)^{\lfloor 4\sqrt{n} / \log n \rfloor}\]
Note that $\lfloor x \rfloor > x / 4$ for any $x > 1$.
Since $4 \sqrt{n} / \log n > 1$ for $n \ge 2$, this implies $\lfloor 4\sqrt{n} / \log n \rfloor > \sqrt{n} / \log n$. Therefore, we have:
\[\Pr\left[\bigcap_{i = 1}^k \overline{E_i} \right]
  < \left(1 - n^{-1/4}\right)^{\sqrt{n} / \log n}.\]
By rewriting the exponent and applying the well-known inequality $(1 - 1 / x)^x \le 1/e$, we get:
\[\Pr\left[\bigcap_{i = 1}^k \overline{E_i} \right]
  < \left(\left(1 - n^{-1/4} \right)^{n^{1/4}}\right)^{n^{1/4} / \log n}
  \le e^{-n^{1/4}/\log n}.\]
Therefore, we have that for $n$ sufficiently large the entries of one of the subsequences $b^{(i)}$ are all $0$, w.h.p.
This implies that $a$ contains at least $m$ consecutive non-candidates.

It remains to show that at least one element of $b$ is $1$ and therefore there is at least one candidate on the outer boundary.
By Lemma~\ref{lem:outerboundlen}, this holds with probability $1 - 2^{-L} \le 1 - 2^{-\sqrt{n}}$.
Therefore, for $n$ sufficiently large, there is a segment of size at least $0.25 \log n$, w.h.p.
\qed
\end{proof}

We can now show that there is a unique candidate that has an identifier that is strictly greater than the identifier of every other candidate.

\begin{lemma} \label{lem:highestid}
For $n$ sufficiently large, there is a candidate $c^*$ such that $c^*.\id > c.\id$ for every candidate $c \ne c^*$, w.h.p.
\end{lemma}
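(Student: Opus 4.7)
The plan is to combine Lemma~\ref{lem:seglen} with a straightforward union bound, exploiting the convention that longer identifiers dominate shorter ones to reduce the problem to avoiding collisions among equal-length random strings.

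First I would observe that, by the tie-breaking rule from Section~\ref{subsec:idsetup}, any candidate achieving the maximum identifier must have a segment of the maximum length $\ell^*$ among all candidate segments on the outer boundary. Indeed, any candidate whose segment is strictly shorter than $\ell^*$ has an identifier strictly lower than any candidate attaining length $\ell^*$, regardless of digits. Hence it suffices to show that, w.h.p., there is a unique candidate $c^*$ among those with segment length $\ell^*$ whose identifier strictly dominates the identifiers of all other such candidates.

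Next I would invoke Lemma~\ref{lem:seglen}: for $n$ sufficiently large, $\ell^* \geq m := 0.25 \log n$ with high probability. Condition on the entire segment structure (i.e., on the coin flips determining candidates and segment lengths) satisfying this event. Because the digit tokens assign digits uniformly and independently from $[0, r-1]$ across different agents and different segments, the identifiers of distinct candidates are independent conditional on this structure. For any two candidates $c_1 \neq c_2$ with $|c_1.\seg| = |c_2.\seg| = \ell^*$, we have $\Pr[c_1.\id = c_2.\id] = r^{-\ell^*} \leq r^{-m}$, while for candidates with unequal segment lengths the probability of equality is $0$. Since there are at most $\binom{L}{2} \leq n^2$ candidate pairs, a union bound yields
\[
\Pr[\text{two candidates tie at the maximum} \mid \ell^* \geq m] \leq n^2 \cdot r^{-0.25 \log n} = n^{2 - 0.25 \log_2 r}.
\]

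Finally I would fix the constant $r$ large enough as a function of the target w.h.p.\ exponent $c > 1$, specifically any $r \geq 2^{4(c+3)}$, so that $n^{2 - 0.25 \log_2 r} \leq n^{-(c+1)}$. Combining this with the failure probability from Lemma~\ref{lem:seglen} via another union bound yields the claim. The main technical subtlety is not the probability calculation itself but justifying the independence of digits across candidates and the reduction to same-length identifiers; once those are spelled out cleanly, choosing $r$ as a sufficiently large constant closes the argument.
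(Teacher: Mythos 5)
Your proposal is correct and follows essentially the same route as the paper: reduce to candidates of maximal segment length via the length tie-breaking rule, invoke Lemma~\ref{lem:seglen} to get $\ell^* \ge 0.25\log n$ w.h.p., and union-bound the probability of an identifier collision, choosing the radix $r$ as a sufficiently large constant. The only cosmetic difference is that you union-bound over all $\binom{L}{2}$ pairs (costing an extra factor of $n$, absorbed by taking $r$ slightly larger) where the paper bounds only the $|M|-1$ comparisons against a fixed maximizer $c^*$.
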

\begin{proof}
Let $C$ be the set of all candidates on the outer boundary, let $M \subseteq C$ be the subset of candidates with maximal segment length, and let $c^*$ be some candidate with the highest identifier.
Since shorter identifiers are defined to be less than longer identifiers, we must have $c^* \in M$ and $c^*.\id > c.\id$ for all $c \in C \setminus M$.
It remains to show that $c^*.\id > c.\id$ for all $c \in M \setminus \{ c^* \}$.
This is the case if the identifier of $c^*$ is unique.

By definition, the identifiers of the candidates in $M$ all consist of the same number of digits, which must be at least $0.25 \log n$ by Lemma~\ref{lem:seglen}.
Each digit is chosen independently and uniformly at random from the interval $[0, r - 1]$ for a constant $r$ of our choice.
Therefore, for any candidate $c \in M \setminus \{ c^* \}$ the probability that $c.\id = c^*.\id$ is at most $r^{-0.25 \log n}$.
Applying the union bound gives us that the probability that there exists such a candidate is at most:
\[(|M| - 1) \cdot r^{-0.25 \log n} < n^{1 - 0.25 \log r},\]
which concludes the proof.
\qed
\end{proof}

We can now prove the correctness of the algorithm.

\begin{theorem} \label{thm:correctness}
The algorithm solves the leader election problem, w.h.p.
\end{theorem}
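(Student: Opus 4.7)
The plan is to establish the theorem in two halves: a safety claim (no particle erroneously becomes the leader) and a liveness claim (some particle eventually does), relying on Lemma~\ref{lem:highestid} as the main probabilistic input and treating the remainder of the argument deterministically, conditioned on the w.h.p.\ event that a unique candidate $c^*$ on the outer boundary has a strictly greater identifier than every other candidate on that boundary.

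For safety, I would argue that an agent can declare its particle the leader only as the terminal step of the boundary identification phase, and boundary identification succeeds only on the outer boundary since the external-angle computation described in Section~\ref{subsec:boundid} distinguishes $\alpha = 360^\circ$ from $\alpha = -360^\circ$. Hence any candidate on an inner boundary, even if it survives identifier comparison and passes solitude verification, will revoke its candidacy. On the outer boundary, I would show that $c^*$ is the \emph{only} candidate that can pass both solitude verification and boundary identification: any other candidate $c \ne c^*$ on the outer boundary must, before declaring leadership, have compared its identifier against every token sequence that eventually traverses its segment, and in particular against the token sequence generated by $c^*$; since $c^*.\id > c.\id$, case~$(i)$ of the comparison in Section~\ref{subsec:idcomp} applies and $c$ withdraws. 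Solitude verification is invoked by a candidate only upon seeing an identifier equal to its own, and by Lemma~\ref{lem:highestid} no candidate on the outer boundary other than $c^*$ can legitimately trigger it with a token sequence carrying the maximum identifier; the agent-identifier check at the end of the extended segment, as described in Section~\ref{subsec:solitudever}, rules out the remaining pathological case in which two distinct agents of the same particle occupy the same node.

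For liveness, I would show that $c^*$ eventually declares itself the leader. First, $c^*$ never withdraws during identifier comparison: whenever the token sequence of another candidate $c$ traverses $c^*.\seg$, the comparison falls into case~$(ii)$ (or equality in identifier followed by solitude verification that, if $c$ is still alive, will be handled separately) but never into case~$(i)$, because $c.\id < c^*.\id$. Standard progress arguments for the asynchronous amoebot model, combined with the token-forwarding rules (each agent holds at most two tokens, forwards at most one per activation, and the FIFO order is preserved), guarantee that every token sequence makes forward progress in every round; thus eventually $c^*$'s own token sequence returns to $c^*$, triggering solitude verification. By this time, every other candidate on the outer boundary has either already withdrawn or is in the process of withdrawing due to an earlier comparison with $c^*$'s token sequence; the extended segment of $c^*$ therefore spans the entire outer boundary, so the summation of edge vectors yields the zero vector and the agent identifier at the end of the extended segment equals that of $c^*$. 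Solitude verification returns \emph{solitary}, boundary identification returns $\alpha = 360^\circ$, and $c^*$ becomes the leader.

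The subtle point, and the step I would spend the most care on, is the interleaving between identifier comparison and solitude verification during the asynchronous execution. I would need to argue that: (a) solitude verification cannot cause $c^*$ to revoke its candidacy prematurely (revocation only occurs if solitude verification reports non-solitary \emph{and} identifier comparison has already eliminated the candidate, per the final paragraph of Section~\ref{subsec:solitudever}); (b) although solitude verification for $c^*$ may be triggered earlier, while competing candidates are still alive, the executions do not interfere because each execution of solitude verification is self-contained along the current extended segment; and (c) the delimiter-token mechanism correctly resets agents and tokens so that repeated comparisons all yield the right verdict. Once these bookkeeping points are verified, the theorem follows by combining the deterministic safety/liveness argument with the w.h.p.\ guarantee from Lemma~\ref{lem:highestid}.
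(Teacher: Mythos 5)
Your proposal is correct and follows essentially the same route as the paper's proof: condition on the w.h.p.\ event of Lemma~\ref{lem:highestid} that a unique candidate $c^*$ on the outer boundary holds the strictly greatest identifier, argue that $c^*$ survives identifier comparison while every other candidate is eliminated by $c^*$'s token sequence, and conclude that $c^*$ eventually passes solitude verification and boundary identification while inner-boundary candidates can never produce a leader. Your explicit safety/liveness decomposition and the bookkeeping points (a)--(c) are a somewhat more detailed framing of the same argument the paper gives more tersely.
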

\begin{proof}
We must show that eventually a single particle irreversibly declares itself to be the leader of the particle system and no other particle ever declares itself to be the leader.
Again, we consider only the agents on the outer boundary as agents on an inner boundary will never cause their particles to declare themselves as leaders.
Once every particle has finished the boundary setup phase, every agent has finished the segment setup phase, and every candidate has finished the identifier setup phase, with high probability there is a unique candidate $c^*$ that has the highest identifier on the outer boundary by Lemma~\ref{lem:highestid}.
Since $c^*$ has the highest identifier, it does not withdraw its candidacy during the identifier comparison phase.
In contrast, every other candidate $c \ne c^*$ eventually withdraws its candidacy because the token sequence of $c^*$ eventually traverses $c.\seg$.
Therefore, such an agent $c$ cannot cause its particle to become the leader.
Once $c^*$ is the last remaining candidate on the outer boundary, it eventually triggers the solitude verification phase because the token sequence of $c^*$ eventually traverses $c^*.\seg$ while $c^*$ is not already performing solitude verification.
After verifying that it is the last remaining candidate, $c^*$ executes the boundary identification phase and determines that it lies on the outer boundary.
It then instructs its particle to declare itself the leader of the particle system.
\qed
\end{proof}

\subsection{Running Time} \label{subsec:runningtime}

Recall from Section~\ref{subsec:problem} that the running time of an algorithm for leader election is defined as the number of asynchronous rounds until a leader is declared.
Since the given algorithm always establishes a leader on the outer boundary, we can limit our attention to that boundary.
The first two phases of the algorithm, namely the boundary setup and segment setup phases, consist entirely of computations based on local neighborhood information.
Therefore, these phases can be completed instantly by each particle upon its first activation.
Since each particle is activated at least once in every round, every particle completes these first two phases after a single round.

When an agent becomes a candidate, it initiates the identifier setup phase.
We have the following lemma.

\begin{lemma} \label{lem:segidtime}
For a segment of length $\ell$, the identifier setup takes $\mathcal{O}(\ell^2)$ rounds.
\end{lemma}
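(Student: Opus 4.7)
The plan is to bound the identifier setup on a segment of length $\ell$ by counting the number of elementary token moves executed during the phase and then translating that count into asynchronous rounds via the standard argument that a lone, unobstructed token advances at least one position per round.

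First I would enumerate the three stages of the phase as described in Section~\ref{subsec:idsetup}. Stage~(i): the single token leaves the candidate $c$ and traverses the segment to its last agent, assigning a random digit at each visited agent, which costs $\ell - 1$ moves. Stage~(ii): the reversal procedure, in which for $i = 1, 2, \ldots, \lfloor \ell/2 \rfloor$ the token reads the digit at position $\ell - i + 1$, carries it to position $i$ and writes the reversed copy there, then picks up the digit at position $i$ and carries it to position $\ell - i + 1$. Each such round trip uses at most $2(\ell - 2i + 1)$ moves, so the total cost of stage~(ii) is
\[\sum_{i=1}^{\lfloor \ell/2 \rfloor} 2(\ell - 2i + 1) \;=\; \mathcal{O}(\ell^2).\]
Stage~(iii): the token returns to $c$ to signal completion, costing at most $\ell - 1$ additional moves. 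Summed together the total number of token moves is $\mathcal{O}(\ell^2)$.

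Next I would translate moves into asynchronous rounds. Throughout the identifier setup phase for a candidate $c$ there is exactly one token in play within $c.\seg$, no agent of $c.\seg$ ever holds a second token belonging to this phase, and the token never leaves $c.\seg$. Consequently the particle currently holding the token is never blocked from forwarding it: its neighbor in $c.\seg$ has free capacity to receive the token. Since the definition of an asynchronous round guarantees that every particle is activated at least once per round, the particle currently holding the token is activated and forwards it (or performs the read/write step of the reversal) at least once per round. Thus each round contributes at least one of the token moves counted above, and the phase therefore finishes within $\mathcal{O}(\ell^2)$ rounds.

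The main subtlety I expect is justifying the ``at least one move per round'' step carefully, because other phases may be running concurrently on the same particles (the segment setup is instantaneous, but the identifier comparison tokens of preceding candidates may also be present). However, those other tokens live in disjoint token slots in the particles' constant-size memory and, by the description of the phase, do not compete with the single identifier-setup token for passage; so an activation of the holding particle always suffices to progress the token by one step. Once this is established, the lemma follows immediately from the move count above.
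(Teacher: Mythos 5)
Your proposal is correct and follows essentially the same route as the paper: bound the total length of the lone identifier-setup token's trajectory (linear for the outward and return trips, $\mathcal{O}(\ell^2)$ for the back-and-forth copying) and convert trajectory length to rounds via the observation that an unblocked token is forwarded at least once per round. You merely make explicit the move-count summation that the paper dismisses as ``not difficult to see,'' so no further changes are needed.
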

\begin{proof}
In the identifier setup phase, a token that is created by the candidate first traverses the segment to establish the random digits of the identifier.
Once the token reaches the end of the segment, it creates a copy of the identifier that is stored in reversed order in the segment by moving in an alternating fashion back and forth through the segment.
Finally, once the copying is complete, the token moves back to the candidate.

The token must be forwarded at least once per round since there are no other tokens blocking it; thus, we can upper bound the number of rounds required by the length of the token's trajectory. 
It is not difficult to see that the token takes $\mathcal{O}(\ell^2)$ steps, so we conclude the identifier setup phase takes $\mathcal{O}(\ell^2)$ rounds.
\qed
\end{proof}

To bound the number of rounds required to complete the identifier setup phase for all segments on the outer boundary, we have to bound the maximal length of a segment.

\begin{lemma} \label{lem:seglentime}
The length of a segment on the outer boundary is $\mathcal{O}(\log n)$, w.h.p.
\end{lemma}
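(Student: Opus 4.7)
The plan is a standard longest-run argument applied to the sequence of coin flips along the outer boundary. Recall that during the segment setup phase each of the $L$ agents on the outer boundary independently flips a fair coin, becoming a candidate on heads and a non-candidate on tails, and a segment of length $\ell$ corresponds to a candidate followed by exactly $\ell-1$ non-candidates before the next candidate.

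First I would fix an arbitrary starting agent and label the agents along the boundary cycle $a_1,\dots,a_L$ with their coin flips $b_1,\dots,b_L \in \{0,1\}$ (heads $=1$). For any index $i$, the event that a segment of length at least $\ell$ begins at $a_i$ requires $b_i = 1$ and $b_{i+1}=\dots=b_{i+\ell-1}=0$ (indices mod $L$), which are independent events occurring with probability at most $(1/2)^\ell$. Taking a union bound over all $L$ possible starting positions on the cycle, the probability that some segment has length at least $\ell$ is at most $L \cdot 2^{-\ell}$.

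Next I would use the trivial upper bound $L \le n$ (the outer boundary contains at most $n$ agents) and set $\ell = c \log n$ for a sufficiently large constant $c > 1$. Then $L \cdot 2^{-\ell} \le n \cdot n^{-c} = n^{-(c-1)}$, which is at most $n^{-c'}$ for any desired constant $c' > 1$ provided $c$ is chosen large enough. Hence the maximum segment length on the outer boundary is $O(\log n)$ with high probability.

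I do not anticipate any real obstacle: the only minor points to be careful about are that segments wrap around the boundary cycle (handled by interpreting indices modulo $L$ in the union bound) and that we must apply the argument to the \emph{outer} boundary only, which is fine since Lemma~\ref{lem:outerboundlen} together with the upper bound $L \le n$ gives us exactly what we need.
\qed
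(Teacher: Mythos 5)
Your proposal is correct and follows essentially the same argument as the paper: bound the probability that a segment of length at least $c\log n$ starts at a fixed agent by $(1/2)^{c\log n}$, then union bound over all agents on the outer boundary. The only minor imprecision is your claim that $L \le n$; since a particle can occur up to three times as distinct agents on the same boundary, the correct bound (and the one the paper uses) is $L \le 3n$, which changes nothing asymptotically.
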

\begin{proof}
For any constant $k \in \mathbb{R}^+$, the probability that an agent becomes a candidate with a segment of length at least $k \log n$ is at most $(1 / 2)^{k \log n} = n^{-k}$.
Since there are $n$ particles in the particle system and each particle corresponds to at most $3$ agents, there are at most $3n$ agents on the outer boundary.
Applying the union bound shows that the probability of there being a segment of length at least $k \log n$ is at most $3n^{1-k}$, which proves the lemma.
\qed
\end{proof}

Combining the previous two lemmas gives us the following corollary.

\begin{corollary} \label{cor:idsetuptime}
All candidates on the outer boundary complete the identifier setup phase after $\mathcal{O}(\log^2 n)$ rounds, w.h.p.
\end{corollary}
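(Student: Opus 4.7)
The plan is to derive the corollary directly by combining Lemma~\ref{lem:segidtime} and Lemma~\ref{lem:seglentime}, with one minor additional observation about independence of segments. First, I would apply Lemma~\ref{lem:seglentime} (noting that its own proof already takes a union bound over all $\mathcal{O}(n)$ agents on the outer boundary) to conclude that, with high probability, \emph{every} segment on the outer boundary has length $\ell = \mathcal{O}(\log n)$. Conditioning on this event, Lemma~\ref{lem:segidtime} immediately gives that each individual segment's identifier setup finishes within $\mathcal{O}(\ell^2) = \mathcal{O}(\log^2 n)$ rounds.

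Next I would argue that the identifier setup phases of the different candidates on the outer boundary do not slow each other down. The description of the identifier setup phase in Section~\ref{subsec:idsetup} shows that the single token used by a candidate $c$ to both generate and reverse the identifier never leaves $c.\seg$. Since distinct segments on a single boundary consist of disjoint sets of agents, the tokens of different candidates act on disjoint portions of the boundary and cannot block one another. Thus the analysis in the proof of Lemma~\ref{lem:segidtime}, which only uses the fact that the token is forwarded at least once per round (true because no other token ever sits in front of it), remains valid even when all candidates execute the identifier setup concurrently.

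Putting these together, the number of rounds until the last candidate on the outer boundary finishes identifier setup is at most the maximum of $\mathcal{O}(\ell^2)$ over all segments, which by the first step is $\mathcal{O}(\log^2 n)$ w.h.p. The only subtlety worth spelling out — and what I would consider the main (though minor) obstacle — is the independence claim: one must verify from the protocol description that the identifier setup token of one candidate never crosses a segment boundary or shares an agent with another candidate's token, so that the per-segment bound of Lemma~\ref{lem:segidtime} transfers unchanged to the concurrent setting. Once this is in place, the corollary follows by a union of the two high-probability events.
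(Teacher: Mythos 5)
Your proposal is correct and matches the paper's argument: the corollary is obtained there exactly by combining Lemma~\ref{lem:seglentime} (all segments have length $\mathcal{O}(\log n)$ w.h.p.) with Lemma~\ref{lem:segidtime} (identifier setup on a length-$\ell$ segment takes $\mathcal{O}(\ell^2)$ rounds). Your extra remark that each candidate's token never leaves its own segment, so concurrent executions cannot block one another, is a correct elaboration of what the paper leaves implicit in the phrase ``there are no other tokens blocking it.''
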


After the identifiers have been generated, they are compared in the identifier comparison phase.
In this phase, a set of digit tokens, one for each agent on the boundary, traverses the boundary against the direction of the cycle spanning it.
Each agent can store at most two tokens.
The tokens are not allowed to overtake each other, so agents maintain the order of the tokens when forwarding them.
Note that a token is never delayed unless it is blocked by tokens in front of it.
Therefore, an agent $a$ forwards a token whenever $a.\pred$ can hold an additional token.
Finally, an agent forwards at most one token for each activation.

We define the number of \emph{steps} a token has taken as the number of times it's been forwarded from one agent to the next since its creation.
Let $T$ be the earliest round such that at its beginning every agent on the outer boundary has created its digit token.
We have the following lemma.

\begin{lemma} \label{lem:idcomptime}
At the beginning of round $T + i$ for $i \in \mathbb{N}$, each digit token on the outer boundary has taken at least $i$ steps.
\end{lemma}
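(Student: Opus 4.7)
The plan is induction on $i$. The base case $i = 0$ is immediate: by the definition of $T$, every digit token on the outer boundary exists at the start of round $T$, and each trivially has taken $\geq 0$ steps.

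For the inductive step, I assume that at the beginning of round $T+i$ each digit token has taken at least $i$ steps, and want to conclude the same with $i$ replaced by $i+1$ at the beginning of round $T+i+1$. Since step counts are monotone nondecreasing, any token that has already taken strictly more than $i$ steps at the start of round $T+i$ automatically satisfies the claim one round later, so I focus on a ``minimum'' token $t$ whose step count is exactly $i$ at the start of round $T+i$. The goal is to show $t$ is forwarded at least once during round $T+i$.

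Let $a$ be the agent holding $t$ at the start of round $T+i$. The structural properties I plan to exploit are: (i) agents forward tokens in FIFO order and tokens cannot overtake, so any token that arrives at $a$ during the round is placed strictly behind $t$ and does not displace $t$ from the head of $a$'s queue; and (ii) since there are exactly $L$ tokens distributed among the $L$ agents of combined capacity $2L$, the number of full agents equals the number of empty agents, so the chain $a, a.\pred, a.\pred.\pred, \ldots$ of consecutively full agents must terminate at some agent $b$ whose predecessor has spare capacity. Because every agent along this blockage chain is activated at least once during round $T+i$, the forwarding cascade beginning at $b$ progressively frees space backwards through the chain until $a.\pred$ has room, at which point the next activation of $a$ forwards $t$.

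The main obstacle is handling the asynchronous adversarial scheduling: a priori, the adversary could order activations so that this cascade does not complete within a single round (for example, by activating $a$ only once and early, before the space at $a.\pred$ opens). I expect to resolve this either by strengthening the induction hypothesis with a structural invariant tying a token's step count to its position in its agent's queue (so that a minimum-step token is necessarily the oldest at its agent and is first in line for any opportunity), or via an amortized argument showing that whenever $t$ fails to move during some round it must have accumulated a ``credit'' of strictly more than one step in an earlier round, which preserves the per-token lower bound $\geq i+1$ at the start of round $T+i+1$ even when the cascade argument is only partial.
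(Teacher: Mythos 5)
Your setup (induction on $i$, reducing to a token $t$ whose step count is exactly $i$, and the FIFO observation that $t$ is at the head of its agent's queue) matches the paper's proof. But the central step --- actually showing that $t$ is forwarded during round $T+i$ --- is not established, and you concede as much: the blockage-chain/cascade argument requires the activations along the chain to occur in a favorable order within the round, and an adversarial scheduler can activate $t$'s agent once, early, before any space has opened at its predecessor. The two escape routes you sketch (a strengthened structural invariant, or an amortized credit argument) are exactly where the content of the proof lives, and neither is carried out. As written, this is a genuine gap, not a detail to be filled in.

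The paper closes this gap without any cascade. It compares the asynchronous execution to a synchronous lockstep execution in which every agent holds exactly one token in every round, so the synchronous step count is $s_i(t)=i$, and proves $a_i(t)\ge s_i(t)$ by induction. The key move in the case $a_i(t)=s_i(t)$ is to apply the induction hypothesis not to $t$ but to the tokens \emph{ahead} of it: the next token $t'$ in the direction of travel started one agent ahead of $t$ and satisfies $a_i(t')\ge s_i(t')=s_i(t)=a_i(t)$, so $t'$ sits at least one agent ahead of $t$ at the beginning of round $T+i$; likewise $t''$ sits at least two agents ahead. Since tokens never overtake and there are no tokens between $t$, $t'$, and $t''$, the agent that $t$ must move into holds at most the single token $t'$ \emph{already at the start of the round}, and the only token that can subsequently arrive there is $t$ itself. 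Hence no chain of full agents can ever form in front of a minimum-step token, and $t$ is forwarded upon the first activation of its agent in that round, which is guaranteed by the definition of a round. This is precisely the ``structural invariant tying step count to position'' you hoped for; it falls out of the synchronous comparison for free rather than requiring a separate amortized accounting. (Your counting observation that the number of agents holding two tokens equals the number holding none is correct but does not help here, since it says nothing about where the empty agents are relative to $t$.)
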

\begin{proof}
We establish a lower bound on the number of steps a token took by comparing the \emph{asynchronous} execution of the token passing scheme with a \emph{synchronous} execution in which tokens move in lockstep.
For the synchronous execution, we assume that each token is initially stored at the agent that created it.
We refer to this point in time in the synchronous execution as round $0$.
The tokens move in lockstep along the boundary so that in every round each agent stores exactly one token.
For a token $t$ let $s_i(t)$ be the number of steps $t$ has taken by the beginning of round $i$ of the synchronous execution; by definition, we have $s_i(t) = i$.
Similarly, let $a_i(t)$ be the number of steps $t$ has taken at the beginning of round $T + i$ of the asynchronous execution.
We show by induction on $i$ that $a_i(t) \ge s_i(t)$ for all tokens $t$.

The statement holds for $i = 0$ by definition.
Now suppose that the statement holds for some $i \ge 0$ and consider any token $t$.
We show $a_{i+1}(t) \ge s_{i+1}(t)$.
If $a_i(t) > s_i(t)$ then
\[a_{i+1}(t) \ge a_i(t) \ge s_i(t) + 1 = s_{i+1}(t),\]
and therefore the statement holds.
So assume $a_i(t) = s_i(t)$.
To prove $a_{i+1}(t) \ge s_{i+1}(t)$, we have to show that $t$ is forwarded at least once in round $T + i$ of the asynchronous execution.
Since the tokens do not overtake each other, their relative order along the boundary (considering both a token's position on the boundary as well as which token is forwarded first if an agent holds two tokens) is well-defined and remains unchanged in both the synchronous and the asynchronous execution.
Let $t'$ be the token in front of $t$ in the direction of the traversal (i.e., against the direction of the cycle spanning the boundary).
In the synchronous execution, $t'$ started one agent ahead of $t$.
Furthermore, the induction hypothesis together with our assumption that $a_i(t) = s_i(t)$ implies:
\[a_i(t') \ge s_i(t') = s_i(t) = a_i(t).\]
Therefore, $t'$ is at least one agent ahead of $t$ in the asynchronous execution.
By an analogous argument, one can see that the token $t''$ following $t'$ is at least two agents ahead of $t$ in the asynchronous execution.
Since the tokens preserve their order, there are no other tokens in between $t$, $t'$, and $t''$.
Thus, the agent to which $t$ should be forwarded in round $T + i$ of the asynchronous execution holds at most one token, namely $t'$.
Furthermore, even if the agent holding $t$ at the beginning of round $T + i$ of the asynchronous execution holds a second token, $t$ is forwarded first because the order of the tokens is preserved.
Therefore, once the agent holding $t$ is activated in round $T + i$ of the asynchronous execution --- which must occur since every particle (and therefore agent) is activated at least once per round --- $t$ is indeed forwarded.
\qed
\end{proof}

In the proof of Lemma~\ref{lem:idcomptime} we showed that the asynchronous execution of a token passing scheme \emph{dominates} a corresponding synchronous execution in terms of a suitable measure of progress. Next, we analyze the runtime of the solitude verification phase using a similar ``dominating argument''.

\begin{lemma} \label{lem:solitudevertime}
For an extended segment of length $\ell$, the solitude verification phase takes $\mathcal{O}(\ell)$ rounds.
\end{lemma}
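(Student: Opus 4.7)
The plan is to decompose solitude verification into four overlapping stages---(i) the activation token's forward sweep through the extended segment, (ii) and (iii) the backward flows of positive and negative tokens toward $c$, and (iv) the activation token's return to $c$---and bound each stage by $O(\ell)$ rounds; summing then yields the lemma.

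Stage (i) is immediate: the activation token faces no opposing traffic and so advances at least once per activation of its holding agent, reaching the next candidate within $\ell$ rounds and reading its agent identifier. Stage (iv) is analogous once (ii) and (iii) are finished, because the activation token then returns along a corridor of settled or already-deleted tokens and is never blocked, again costing $O(\ell)$ rounds.

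The heart of the argument lies in stages (ii) and (iii). I would reuse the synchronous-dominates-asynchronous technique from Lemma~\ref{lem:idcomptime}. There are at most $\ell$ positive tokens, each created at distance at most $\ell$ from $c$, that must flow back and settle into a contiguous block of $\lceil k/2 \rceil$ agents anchored at $c$ (where $k$ is the total number of positive tokens). I would introduce a synchronous reference schedule in which every unsettled positive token attempts one step toward $c$ per round, subject to the two-per-agent capacity and the no-overtaking rule; by an induction on round count mirroring Lemma~\ref{lem:idcomptime}, the asynchronous execution step-wise dominates this reference. It then remains to observe that in the synchronous schedule the unsettled tokens form a pipeline whose rearmost token makes one unit of progress every round---either advancing one agent toward $c$ or seeing its immediate successor settle and thereby freeing capacity ahead---so all $k$ positive tokens settle within $O(\ell)$ rounds after their creation. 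The negative tokens are handled identically, so the activation token's wait during stage (iv) is also at most $O(\ell)$ rounds.

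The main obstacle I anticipate is cleanly formalizing the dominance step when each agent has capacity two and the settled region grows dynamically. In Lemma~\ref{lem:idcomptime} the reference schedule places exactly one token per agent, which makes the ``token $t'$ is at least one agent ahead of $t$'' inductive step essentially automatic; here the reference schedule has variable occupancies and ``target'' positions that depend on how many tokens have already settled, so I must phrase the induction in terms of steps taken by each token and check, case by case, that whenever a synchronous token is not blocked neither is its asynchronous counterpart. Once that bookkeeping is pinned down, the rest of the argument is routine.
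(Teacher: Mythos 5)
Your proposal is correct and follows essentially the same route as the paper: the same stage decomposition (unhindered forward sweep, dominating-argument bound on the settling of positive and negative tokens, unhindered return), and the same synchronous reference schedule with capacity two and no overtaking, dominated via induction by the asynchronous execution. The case analysis you flag as the delicate part is exactly what the paper's proof carries out (phrased in terms of agent positions $a_i(t) \le s_i(t)$ rather than steps taken, which is equivalent), so your plan fills in to the paper's argument with only routine bookkeeping.
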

\begin{proof}
The token passing scheme of the solitude verification phase is executed independently for the $x$ and the $y$ axes.
We consider one of these executions and show that it takes $\mathcal{O}(\ell)$ rounds.
First, the activation token moves through the extended segment and creates positive and negative tokens.
Since the activation token moves through the extended segment unhindered, this traversal takes $\mathcal{O}(\ell)$ rounds.
It then moves back towards the candidate, but remains behind the positive and negative tokens it created.
These two types of tokens move back towards the candidate independently of each other.
However, two tokens of the same type cannot overtake each other.
Once all tokens of both types have settled, the activation token can move back to the candidate unhindered which takes another $\mathcal{O}(\ell)$ rounds.
Thus, it remains to determine the number of rounds until all tokens have settled.

We use a dominating argument similar to that of Lemma~\ref{lem:idcomptime}.
Consider a single token type; w.l.o.g., suppose this is positive.
To simplify our notation, we define round $0$ of the asynchronous execution to be the earliest round such that at its beginning the activation token has already created all positive tokens.
We compare the asynchronous execution with the following synchronous execution: in round $0$ of the synchronous execution, each token is stored at the agent that created it.
The tokens then move in lockstep towards the candidate.
Apart from the movement of the tokens, the synchronous execution works the same way as the asynchronous execution, i.e., each agent can store two tokens and the tokens move as close to the candidate as possible.
We assign the numbers $1, \ldots, \ell$ to the agents of the extended segment starting with $1$ at the candidate.
For a token $t$ let $s_i(t)$ (respectively, $a_i(t)$) be the number assigned to the agent that holds $t$ at the beginning of round $i$ of the synchronous (respectively, asynchronous) execution.
We show by induction on $i$ that $a_i(t) \le s_i(t)$ for all tokens $t$.

The statement holds for $i = 0$ by definition.
Suppose that the statement holds for some round $i \ge 0$ and consider a token $t$.
We show $a_{i+1}(t) \le s_{i+1}(t)$.
If $a_i(t) < s_i(t)$ then:
\[a_{i+1}(t) \le a_i(t) \le s_i(t) - 1 \le s_{i+1}(t),\]
and therefore the statement holds.
So assume $a_i(t) = s_i(t)$.
We need two observations: first, since the tokens do not overtake each other, their order along the extended segment is well-defined and remains unchanged in both the synchronous and the asynchronous execution; second, when an agent holds two tokens in the synchronous execution, both tokens must be settled at their final position.
Let $t'$ be the next token from $t$ towards the candidate.
If there is no such token, the statement holds since $t$ can move unhindered in the asynchronous execution.
Otherwise, we have $s_i(t') \le s_i(t)$ by our first observation.
We distinguish three cases and show that in each case $a_{i+1}(t) \le s_{i+1}(t)$.

\begin{enumerate}
\item If $s_i(t') = s_i(t)$, then by our second observation both $t$ and $t'$ are at their final position. Therefore, $t$ is never forward again in both the synchronous and asynchronous executions. This implies:
\[a_{i+1}(t) = a_{i}(t) = s_i(t) = s_{i+1}(t).\]

\item If $s_i(t') \le s_i(t) - 2$ then the agent ahead of $t$ holds no token in the synchronous execution because there is no token between $t$ and $t'$ by our first observation. By our assumption that $a_i(t) = s_i(t)$ together with the induction hypothesis applied to $t'$, the agent ahead of $t$ also holds no token in the asynchronous execution. Therefore, $t$ is forwarded at least once in the asynchronous execution and we have:
\[a_{i+1}(t) \le a_{i}(t) - 1 = s_i(t) - 1 = s_{i+1}(t).\]

\item If $s_i(t') = s_i(t) - 1$ then we must distinguish between two subcases. If agent $s_i(t')$ holds two tokens in the synchronous execution, then $t'$ is at its final position and hence $t$ is also at its final position. Therefore, the statement holds by the same argument as in the first case above.

So suppose that $s_i(t')$ only holds $t'$ in the synchronous execution.
Let $t''$ be the next token from $t'$ towards the candidate.
If there is no such token, then the agent ahead of $t$ holds at most one token in the asynchronous execution, namely $t'$.
Otherwise, $t''$ is at least two agents ahead of $t$ in the synchronous execution and, by the induction hypothesis, also in the asynchronous execution.
So again, the agent ahead of $t$ holds either no token or only $t'$ in the asynchronous execution.
Therefore, $t$ is forwarded at least once in the asynchronous execution and the statement holds by the same argument as in the second case above.
\end{enumerate}

It is easily seen that the synchronous execution moves all tokens to their final positions in $\mathcal{O}(\ell)$ rounds.
Therefore, the asynchronous execution requires at most $\mathcal{O}(\ell)$ rounds to do the same.
Once all tokens reach their final position, it takes $\mathcal{O}(\ell)$ additional rounds until each token sets the bit showing that it is settled to true.
\qed
\end{proof}

The boundary identification phase is only executed once a candidate determines that it is the last remaining candidate on the boundary.
The following lemma provides an upper bound for the running time of this phase.
Recall that $L$ is defined as the number of agents on the outer boundary.

\begin{lemma} \label{lem:boundidtime}
The boundary identification phase on the outer boundary takes $\mathcal{O}(L)$ rounds.
\end{lemma}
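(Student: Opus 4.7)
The plan is to give a direct unblocked-token argument, simpler than the dominating arguments used for Lemmas~\ref{lem:idcomptime} and \ref{lem:solitudevertime}, since the boundary identification phase uses only a single token. First I would observe that once the unique remaining candidate $c$ on the outer boundary triggers this phase, it sends out exactly one token that carries only the three-bit counter for $k \bmod 5$. This token traverses the outer boundary once in the direction of the cycle, returns to $c$, and causes $c$ to declare leader status or withdraw. No other tokens from this phase exist on the boundary, so the token is never held up behind another token.

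Next I would argue that the token advances by at least one agent in every asynchronous round. By definition of an asynchronous round, every particle is activated at least once per round, and thus so is every agent (since a particle activates all of its agents when it is activated). When the agent currently holding the token is activated, it performs the constant-time computation of reading the incoming and outgoing port directions, updates the counter $\alpha$ according to the turn-angle rule in Figure~\ref{fig:boundid}, and forwards the token to its successor. Nothing can prevent this forwarding since the successor holds no competing boundary-identification token. Therefore, after $k$ rounds the token has advanced at least $k$ agents along the cycle.

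Since the cycle spanning the outer boundary has length $L$, the token returns to $c$ after at most $L$ rounds, and $c$ then performs a constant-time check on the final value of $k \bmod 5$ to decide whether the boundary is the outer boundary. Summing gives a total of $\mathcal{O}(L)$ rounds for the phase.

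There is essentially no hard step here; the only subtle point is justifying that the activation of a particle entails the activation of each of its agents (so that the token is actually forwarded once per round), and that the boundary identification token does not share the boundary with tokens from other phases on this boundary in a way that could block it. The first point is a direct consequence of the agent-execution convention introduced in Section~\ref{subsec:boundsetup}; the second follows because by the time $c$ triggers this phase it is the sole remaining candidate, so the identifier comparison and solitude verification tokens circulating on this boundary either belong to $c$'s own completed computations or to already-withdrawn candidates, and in particular no identifier-comparison or solitude-verification token ever occupies the same slot as the single boundary-identification token's memory — the token is on its own channel and moves unobstructed.
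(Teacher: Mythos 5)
Your proposal is correct and takes essentially the same approach as the paper's proof: the single angle-summing token is never blocked, is forwarded at least once per asynchronous round, and therefore completes its traversal of the length-$L$ boundary in $\mathcal{O}(L)$ rounds. The extra justifications you give (agent activation per particle activation, non-interference with tokens from other phases) are sound elaborations of what the paper leaves implicit.
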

\begin{proof}
The token calculating the angle completely traverses the outer boundary.
Since the boundary has length $L$ and the token is forwarded at least once in every round, this takes $\mathcal{O}(L)$ rounds.
\end{proof}

Finally, we can show the following runtime bound.

\begin{theorem} \label{thm:runningtime}
The algorithm solves the leader election problem in $\mathcal{O}(L)$ rounds, w.h.p.
\end{theorem}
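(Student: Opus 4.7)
The plan is to decompose the runtime into the six phases and bound each by $\mathcal{O}(L)$, using Lemma~\ref{lem:outerboundlen} to absorb the $\mathcal{O}(\log^2 n)$ contributions. Since $L \ge \sqrt{n}$, any bound of order $\mathrm{polylog}(n)$ is $\mathcal{O}(L)$ for $n$ sufficiently large.

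First, I would dispose of the cheap phases. The boundary setup and segment setup phases require only local neighborhood inspection and a single coin flip per agent, so both complete for every particle after the first activation, i.e., within $\mathcal{O}(1)$ rounds. By Corollary~\ref{cor:idsetuptime}, all candidates on the outer boundary finish identifier setup within $\mathcal{O}(\log^2 n) = \mathcal{O}(L)$ rounds, w.h.p.

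The core of the argument concerns identifier comparison. Let $T$ be the round by which every agent on the outer boundary has created its digit token; by the previous paragraph, $T = \mathcal{O}(L)$ w.h.p. By Lemma~\ref{lem:highestid}, w.h.p.\ there is a unique candidate $c^*$ whose identifier is strictly greater than that of every other candidate, so the only token sequence whose digits can match those stored in $c^*.\seg$ is $c^*$'s own sequence. Applying Lemma~\ref{lem:idcomptime} with $i = c \cdot L$ for a sufficiently large constant $c$, by round $T + \mathcal{O}(L)$ every digit token has advanced at least $L$ steps, so in particular $c^*$'s token sequence has completed a full traversal of the boundary and returned to $c^*.\seg$. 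This causes $c^*$ to detect equality and trigger solitude verification for the first time that succeeds at $c^*$; any earlier trigger is impossible because $c^*.\id$ is unique.

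Finally, Lemma~\ref{lem:solitudevertime} bounds the triggered solitude verification by $\mathcal{O}(\ell)$, and since the extended segment of $c^*$ is contained in the outer boundary we have $\ell \le L$, giving $\mathcal{O}(L)$ rounds. Lemma~\ref{lem:boundidtime} then bounds the subsequent boundary identification by $\mathcal{O}(L)$ rounds, after which $c^*$'s particle declares itself the leader. Summing the phase bounds gives $\mathcal{O}(L)$ w.h.p., as required. I expect the main subtlety to be justifying that no spurious solitude verification at $c^*$ can delay this schedule: the uniqueness provided by Lemma~\ref{lem:highestid} rules out equality matches from other candidates, and the rule that $c^*$ ignores re-triggers while a verification is already in progress ensures the first return of $c^*$'s own sequence is the one that drives the algorithm to termination.
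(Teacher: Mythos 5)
Your decomposition and use of the supporting lemmas match the paper's proof almost exactly, but there is a genuine gap in the step where you conclude that the first return of $c^*$'s token sequence ``drives the algorithm to termination.'' The uniqueness of $c^*.\id$ from Lemma~\ref{lem:highestid} does guarantee that only $c^*$'s own token sequence can trigger solitude verification at $c^*$, so you are right that no \emph{earlier} trigger occurs. What it does not guarantee is that this first triggered execution of solitude verification \emph{succeeds}. Solitude verification at $c^*$ succeeds only if every other candidate on the outer boundary has already withdrawn, so that the extended segment of $c^*$ wraps around the entire boundary and the activation token returns to $c^*$ itself. At the moment $c^*$'s token sequence first completes its pass, the other candidates have in general only been \emph{flagged} for withdrawal: a candidate that is in the middle of its own (spuriously triggered) solitude verification must finish that execution --- which can take $\mathcal{O}(L)$ further rounds by Lemma~\ref{lem:solitudevertime} --- before it actually withdraws. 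If any such candidate is still present when $c^*$'s activation token reaches it, the activation token stops at that candidate, $c^*$'s verification reports failure, and $c^*$ must wait for its token sequence to come around again before it can re-trigger.

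The paper closes this gap explicitly: after the token sequences complete one pass ($L$ rounds past $T$), every candidate other than $c^*$ has withdrawn or does so within $\mathcal{O}(L)$ additional rounds; $c^*$'s possibly premature solitude verification fails within $\mathcal{O}(L)$ rounds; a further $L$ rounds let $c^*$'s token sequence traverse $c^*.\seg$ again and trigger the final, successful verification, which takes another $\mathcal{O}(L)$ rounds before boundary identification begins. All of these extra terms are $\mathcal{O}(L)$, so the theorem is unaffected, but your argument as written omits both the bound on when the other candidates actually withdraw and the extra round trip needed to re-trigger solitude verification after a failed attempt. You should add these two pieces; the rest of your proposal (absorbing the $\mathcal{O}(\log^2 n)$ setup cost via $L \ge \sqrt{n}$, and the application of Lemmas~\ref{lem:idcomptime}, \ref{lem:solitudevertime}, and~\ref{lem:boundidtime}) is sound and is exactly the paper's route.
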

\begin{proof}
After the first round, the boundaries, candidates, and segments have been established.
By Corollary~\ref{cor:idsetuptime}, all candidates on the outer boundary complete the identifier setup phase within $\mathcal{}(\log^2 n)$ rounds.
Thus, according to Lemma~\ref{lem:highestid}, a unique candidate $c^*$ with the highest identifier on the outer boundary has been established, w.h.p.
We show that $c^*$ declares itself to be the leader after $\mathcal{O}(L)$ rounds.
At the beginning of round $T = \mathcal{O}(\log^2 n)$, all digit tokens have been created.
According to Lemma~\ref{lem:idcomptime}, after an additional $L$ rounds every digit token has completed a pass along the outer boundary.
At this point, the token sequence of $c^*$ has traversed the segment of every other candidate and, therefore, every candidate except for $c^*$ either has already withdrawn its candidacy or is flagged to do so after completing its current execution of the solitude verification phase.
Since the maximum length of an extended segment of a candidate on the outer boundary is $L$, the candidates that still execute the solitude verification phase withdraw their candidacy after $\mathcal{O}(L)$ additional rounds by Lemma~\ref{lem:solitudevertime}.
Once $c^*$ is the only remaining candidate on the outer boundary, it has to start its final execution of the solitude verification phase.
However, it might have to finish an already running execution of the solitude verification phase which could fail because it was started too early.
In this case, it takes $\mathcal{O}(L)$ rounds for the solitude verification phase to fail and an additional $L$ rounds for the token sequence of $c^*$ to traverse $c^*.\seg$ again, thus triggering the final execution of the solitude verification phase.
This takes another $\mathcal{O}(L)$ rounds.
When $c^*$ determines that it is the last remaining candidate, it executes the boundary identification phase, which takes $\mathcal{O}(L)$ rounds according to Lemma~\ref{lem:boundidtime}.
So after a total of $\mathcal{O}(L)$ rounds, $c^*$ determines that it is the only remaining candidate on the outer boundary and signals its particle to declare itself the leader.
\qed
\end{proof}

Theorem~\ref{thm:runningtime} specifies the running time of the leader election algorithm in terms of the number of \emph{agents} on the outer boundary.
Let $C$ be the number of \emph{particles} on the outer boundary.
Since each particle on the outer boundary corresponds to at most three agents on the outer boundary, we have the following corollary.

\begin{corollary} \label{cor:runningtimeC}
The algorithm solves the leader election problem in $\mathcal{O}(C)$ rounds, w.h.p.
\end{corollary}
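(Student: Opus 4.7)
The plan is to derive this corollary as an essentially immediate consequence of Theorem~\ref{thm:runningtime}, using the structural observation already established in Section~\ref{subsec:boundsetup} that links the number of agents $L$ on the outer boundary to the number of particles $C$ on that boundary. The key fact is that a single particle, when viewed from its local neighborhood, can contribute at most three distinct empty sequences and hence spawn at most three agents on the same boundary. This bound was noted explicitly during the boundary setup description and is restated in the sentence immediately preceding the corollary.

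First, I would formally record the inequality $L \le 3C$: every agent on the outer boundary is produced by some particle on that boundary, and each such particle produces at most three agents there. This gives $L = \mathcal{O}(C)$. Second, I would invoke Theorem~\ref{thm:runningtime}, which states that the algorithm solves leader election in $\mathcal{O}(L)$ rounds, w.h.p. Combining the two yields a running time of $\mathcal{O}(C)$ rounds, w.h.p., which is exactly the claim. The high probability guarantee is inherited directly from Theorem~\ref{thm:runningtime}, so no new probabilistic reasoning is required.

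There is essentially no obstacle here; the corollary is a purely notational restatement of the previous theorem under a different (and arguably more natural) parameterization. The only point worth being slightly careful about is confirming that the constant factor $3$ really is correct in the worst case and that we are counting particles on the \emph{outer} boundary (not in the whole system), so the bound $L \le 3C$ is tight up to a constant and does not inadvertently overcount. Once this is verified, the proof consists of one inequality followed by a one-line appeal to Theorem~\ref{thm:runningtime}.
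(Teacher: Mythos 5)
Your proof is correct and matches the paper's own reasoning exactly: the paper derives this corollary from the observation that each particle on the outer boundary corresponds to at most three agents there, so $L \le 3C$, and then applies Theorem~\ref{thm:runningtime}. No further comment is needed.
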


Depending on the application it might be desirable to specify the running time of the algorithm in terms of the number of particles $n$ in the entire particle system.
Clearly, the number of particles on the outer boundary is at most $n$, implying the following corollary.

\begin{corollary} \label{cor:runningtimeN}
The algorithm solves the leader election problem in $\mathcal{O}(n)$ rounds, w.h.p.
\end{corollary}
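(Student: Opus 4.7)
The plan is to derive this directly from Corollary~\ref{cor:runningtimeC}, which already gives an $\mathcal{O}(C)$ bound on the running time, where $C$ is the number of particles on the outer boundary. The only remaining observation is that $C \le n$, since the outer boundary is a subset of the particle system and each particle on the boundary is counted once in $n$.

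Concretely, I would proceed in one short step: cite Corollary~\ref{cor:runningtimeC} to obtain that the leader election completes in $\mathcal{O}(C)$ asynchronous rounds w.h.p., then note that the set of particles lying on the outer boundary is a subset of the full particle system, so $C \le n$. Substituting this inequality into the $\mathcal{O}(C)$ bound yields $\mathcal{O}(n)$, and the w.h.p.\ qualifier is inherited unchanged from Corollary~\ref{cor:runningtimeC}.

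There is essentially no technical obstacle here: the corollary is a pure restatement of the previous bound in a weaker parameter, and the only quantity one needs to justify is the elementary inequality $C \le n$. No probabilistic argument, no new combinatorial lemma, and no case analysis is needed. The proof in the paper should be no more than a sentence or two, consisting of the citation and the observation above.
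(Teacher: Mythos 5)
Your proposal matches the paper exactly: the paper derives Corollary~\ref{cor:runningtimeN} from Corollary~\ref{cor:runningtimeC} by the single observation that the number of particles on the outer boundary is at most $n$, which is precisely your argument. Correct and complete.
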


Note that compared to Corollary~\ref{cor:runningtimeC}, Corollary~\ref{cor:runningtimeN} represents a quite pessimistic bound on the running time of the algorithm since the number of particles on the outer boundary can much lower than $n$.
For example, a solid square of $n$ particles (formed like the bounding box depicted in Figure~\ref{fig:coordboundbox}) only has $C = \mathcal{O}(\sqrt n)$ particles on its outer boundary.

\section{Conclusion} \label{sec:conclude}

In this paper we presented a randomized leader election algorithm for programmable matter which requires $\mathcal{O}(n)$ asynchronous rounds with high probability.
The main idea of this algorithm is to use coin flips to set up random identifiers for each leader candidate in such a way that at least one candidate has an identifier of logarithmic length.
We then use a token passing scheme that allows us to compare all identifiers on one boundary with each other in a pipelined fashion, respecting the constant-size memory constraints at each particle.
We exploit the geometric properties of the triangular grid only in the last two phases of our algorithm, in which a candidate tests whether it is the last candidate on a boundary and whether it is on the outer boundary.
If a candidate passes both of these tests, it irreversibly declares itself the unique leader.

\bibliographystyle{abbrv}
\bibliography{literature}

\appendix

\section{Variants of the Leader Election Problem} \label{app:variants}

In this section, we consider variants of the leader election problem.
We present three positive results: (1) how a leader can be elected when the particle system contains expanded particles, (2) how the leader election algorithm can be extended such that its execution terminates for all particles, and (3) how to improve the with high probability guarantee on the election of a leader to an almost surely guarantee without changing the $\mathcal{O}(L)$ runtime.
The three variant algorithms can be combined into a single algorithm that satisfies all of the above properties.
We close this section with a negative result concerning the generalization of the leader election problem to arbitrary graphs.

\subsection{Expanded Particles} \label{subapp:expanded}

It is straightforward to extend the leader election algorithm to allow particle systems containing expanded particles.
An expanded particle $p$ simply simulates two distinct contracted particles, one for each node occupied by $p$.
Whenever $p$ is activated, it simulates the activations of the corresponding contracted particles one after another in an arbitrary order, effectively reducing the problem of leader election with expanded particles to leader election without expanded particles.

Since every physical particle is activated at least once in every round, also every simulated particle is activated at least once in every round.
Furthermore, the running time analysis presented in Section~\ref{subsec:runningtime} holds for arbitrary activation orders.
Therefore, the analysis remains valid despite the fact that two simulated particles of an expanded physical particle are always activated immediately after one another.
This implies that the statements of Theorem~\ref{thm:runningtime}, Corollary~\ref{cor:runningtimeC}, and Corollary~\ref{cor:runningtimeN} also hold for particle systems containing expanded particles.

\subsection{Termination for All Particles} \label{subapp:terminationforall}

In the definition of the leader election problem given in Section~\ref{subsec:problem}, the leader is the only particle for which the algorithm must terminate, while any non-leader particle is allowed to execute the algorithm indefinitely.
Accordingly, the algorithm presented in Section~\ref{sec:algo} can actually experience infinite loops for a subset of the agents in certain situations.
For example, consider a particle system with an empty region $R$ of size $1$.
With constant probability, all six agents on the inner boundary corresponding to $R$ become candidates and get the same one-digit identifier.
The identifier comparison phase of the leader election algorithm will never eliminate any of the six candidates in this situation and thus the candidates remain stuck in an infinite loop.

Depending on the application, it might be desirable to have a leader election algorithm that is guaranteed to terminate for all particles.
This can be achieved using the following extension of the algorithm presented in Section~\ref{sec:algo}.
After the leader has been elected, it broadcasts a \emph{termination message} through the particle system.
A particle receiving this message forwards it to each of its neighbors by writing it into their memories and then terminates its execution of the leader election algorithm.
Let $A$ be the set of occupied nodes in $\Geqt$ and let $\Geqt|_A$ be the subgraph of $\Geqt$ induced by $A$.
The running time of this termination broadcast is linear in the diameter $D$ of $\Geqt|_A$, so after $\mathcal{O}(L + D)$ rounds, a leader has emerged with high probability and the execution of the leader election algorithm has terminated for all particles in the particle system.
We summarize this result in the following corollary.

\begin{corollary} \label{cor:allterminate}
There is a leader election algorithm that terminates for all particles in $\mathcal{O}(L + D)$ rounds, w.h.p.
\end{corollary}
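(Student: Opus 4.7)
The plan is to combine the runtime bound for leader election from Theorem~\ref{thm:runningtime} with a straightforward analysis of the termination broadcast, arguing that the broadcast completes within $\mathcal{O}(D)$ additional rounds.

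First, I would invoke Theorem~\ref{thm:runningtime} (and the extension from Section~\ref{subapp:expanded} if we allow expanded particles) to conclude that, w.h.p., some particle $p^*$ declares itself the leader after $\mathcal{O}(L)$ rounds. From that point on, $p^*$ writes the termination message into its own memory and, whenever activated, writes it into the memory of any neighbor that does not already have it, before terminating. Every other particle behaves symmetrically upon first observing the message in its own memory: on its next activation it copies it into all neighbors' memories and then terminates.

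Next, I would bound the broadcast time by $\mathcal{O}(D)$ using a dominating-execution argument analogous to the one used in Lemma~\ref{lem:idcomptime} and Lemma~\ref{lem:solitudevertime}. Let $A$ be the set of occupied nodes and let $d(p)$ denote the hop distance in $\Geqt|_A$ from $p^*$ to particle $p$, so that $d(p) \le D$ for all $p$. Let round $0$ be the round in which $p^*$ becomes the leader. I would prove by induction on $i$ that at the end of round $i$, every particle $p$ with $d(p) \le i$ has the termination message in its memory and is therefore either already terminated or will terminate on its next activation. The base case $i=0$ is immediate. For the inductive step, every particle $p$ with $d(p) = i+1$ has a neighbor $q$ with $d(q) = i$ that holds the message by the end of round $i$; since every particle (hence $q$) is activated at least once in round $i+1$, $q$ writes the message into $p$'s memory during that round. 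After at most $D$ such rounds, every particle holds the message, and each particle terminates on its next activation, which occurs within one additional round.

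Putting these two bounds together, the algorithm elects a leader within $\mathcal{O}(L)$ rounds w.h.p.\ and then terminates for all particles within $\mathcal{O}(D)$ further rounds, for a total of $\mathcal{O}(L+D)$ rounds w.h.p., proving Corollary~\ref{cor:allterminate}.

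The only non-routine point is the broadcast analysis, and it is not really hard: the subtlety is that in the asynchronous model one cannot assume simultaneous forwarding, so I have to phrase the inductive argument in terms of the message having been \emph{written} into $p$'s memory by the end of round $i$ rather than $p$ having already \emph{acted} on it, and then absorb the one-round slack into the additive constant hidden by the $\mathcal{O}(\cdot)$. With that care, both the correctness and the runtime bound follow directly from the round-by-round activation guarantee.
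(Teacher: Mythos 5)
Your proposal is correct and follows essentially the same route as the paper: elect the leader in $\mathcal{O}(L)$ rounds w.h.p.\ via Theorem~\ref{thm:runningtime}, then broadcast a termination message through $\Geqt|_A$ in time linear in its diameter $D$. The paper merely asserts the $\mathcal{O}(D)$ broadcast bound, whereas you fill in the routine hop-distance induction using the once-per-round activation guarantee; this is a welcome but not substantively different elaboration.
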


Note that the parameters $L$ and $D$ are, in general, independent of each other.
For example, it is not hard to construct particle systems such that either $L = \Omega(n)$ and $D = O(\sqrt n)$ or $L = O(\sqrt n)$ and $D = \Omega(n)$.
Nevertheless, $L$ and $D$ are both clearly upper bounded by $n$.
This implies the following corollary.

\begin{corollary} \label{cor:allterminateN}
There is a leader election algorithm that terminates for all particles in $\mathcal{O}(n)$ rounds, w.h.p.
\end{corollary}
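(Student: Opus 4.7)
The plan is to derive this corollary directly from Corollary~\ref{cor:allterminate}, which already gives a runtime bound of $\mathcal{O}(L + D)$ rounds with high probability. Since $n$ is introduced as a single unifying parameter for the particle system, the only work left is to show that both $L$ and $D$ are individually bounded by $\mathcal{O}(n)$, after which the $\mathcal{O}(n)$ bound follows by substitution.

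First, I would bound $L$. Recall that $L$ is defined as the number of agents on the outer boundary, and from Section~\ref{subsec:boundsetup} each particle contributes at most three agents to any collection of boundaries. Therefore $L \le 3n$, giving $L = \mathcal{O}(n)$. Second, I would bound $D$. Since $\Geqt|_A$ is the subgraph of $\Geqt$ induced by the $n$ occupied nodes, it has exactly $n$ vertices and is connected by the problem definition, so its diameter satisfies $D \le n - 1 = \mathcal{O}(n)$.

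Combining these two bounds, we obtain $L + D = \mathcal{O}(n)$, and substituting into the statement of Corollary~\ref{cor:allterminate} yields that the extended leader election algorithm terminates for all particles within $\mathcal{O}(n)$ asynchronous rounds with high probability, which is exactly the claim. I do not anticipate any real obstacle: the bounds $L \le 3n$ and $D \le n$ are both immediate from the definitions, and the high-probability guarantee is inherited verbatim from Corollary~\ref{cor:allterminate} since no additional randomness is introduced in this step.
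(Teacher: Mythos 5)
Your proposal is correct and follows essentially the same route as the paper, which simply observes that $L$ and $D$ are both upper bounded by $n$ and substitutes into Corollary~\ref{cor:allterminate}. Your version is in fact slightly more careful: since $L$ counts \emph{agents} rather than particles, the precise bound is $L \le 3n$ (as you note), whereas the paper loosely states $L \le n$; asymptotically this makes no difference.
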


\subsection{Almost Sure Leader Election} \label{subapp:almostsurele}

The leader election algorithm presented in Section~\ref{sec:algo} elects a leader with high probability.
Accordingly, there is a small but nonzero probability that the algorithm fails to elect a leader if either every agent becomes a non-candidate during the segment setup phase or more than one candidate generates the same highest identifier in the identifier setup phase.
In this section, we describe how the leader election algorithm can be extended such that it \emph{almost surely} (i.e., with probability 1) elects a leader while maintaining the runtime bound of $\mathcal{O}(L)$ rounds with high probability.

The main idea of this extension is to run a second leader election algorithm in parallel to the algorithm presented in Section~\ref{sec:algo}.
The second algorithm sets up the boundaries as described in Section~\ref{subsec:boundsetup}.
Each agent is initially a candidate, and the candidates alternate between the following two phases.
In the first phase, a candidate flips a coin and sends the result along its boundary to both its preceding and its succeeding candidate.
A candidate withdraws its candidacy if it flips tails while both its predecessor and successor flip heads.
Note that this competition locally synchronizes competing candidates.
The second phase of the algorithm corresponds to the solitude verification phase described in Section~\ref{subsec:solitudever}.

Once a candidate determines that it is the last remaining candidate on its boundary, it executes the boundary identification phase described in Section~\ref{subsec:boundid}.
If the candidate lies on an inner boundary, it withdraws its candidacy.
If it lies on the outer boundary, it sends a token along the boundary that stops the execution of the original algorithm in the particles on the outer boundary and, at the same time, checks whether the original algorithm already established a leader.
If there already is a leader, the candidate withdraws its candidacy.
Otherwise, it declares itself the leader.
We have the following theorem.

\begin{theorem} \label{thm:almostsure}
The algorithm elects a leader in $\mathcal{O}(L)$ rounds w.h.p., and eventually solves the leader election problem almost surely.
\end{theorem}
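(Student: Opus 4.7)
The plan is to separate Theorem~\ref{thm:almostsure} into three claims: (i) the $\mathcal{O}(L)$ w.h.p.\ guarantee, (ii) the almost-sure guarantee, and (iii) a safety claim that no two particles ever both declare themselves the leader. For (i), I would simply invoke Theorem~\ref{thm:runningtime} applied to the primary algorithm of Section~\ref{sec:algo}: it elects a leader in $\mathcal{O}(L)$ rounds w.h.p., and since the backup algorithm writes only to separate memory fields, its concurrent execution does not impede the primary. Thus the composite algorithm inherits the $\mathcal{O}(L)$ bound w.h.p.

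For (ii), I would restrict attention to the outer boundary (candidates on inner boundaries correctly withdraw during boundary identification) and show that on this boundary the backup algorithm drives the candidate count down to one with probability $1$. When at least two candidates remain, I claim the next coin-flip phase eliminates at least one candidate with probability bounded below by some constant $p > 0$. Indeed, for any three consecutive candidates $c_1, c_2, c_3$ along the boundary, the event that $c_1$ and $c_3$ flip heads while $c_2$ flips tails has probability $1/8$ and causes $c_2$ to withdraw; the case of exactly two candidates is handled similarly, with one withdrawing with probability $1/2$ since each is simultaneously predecessor and successor of the other. Since coin-flip phases are repeated indefinitely, the probability that the candidate count stays $\geq 2$ forever is bounded by $\lim_{k\to\infty}(1-p)^k = 0$. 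Once only one candidate remains on the outer boundary, solitude verification (correct by Section~\ref{subsec:solitudever}) and boundary identification (correct by Section~\ref{subsec:boundid}) together cause that candidate either to declare itself leader, or — if the primary has already declared one — to withdraw.

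The main obstacle is (iii), the safety argument under asynchrony, since both the primary's sole surviving candidate and the backup's sole surviving candidate could, in principle, be poised to declare themselves leader simultaneously. I would use the standard sequential-ordering principle of the amoebot model to linearize the execution, and then argue that whichever of the primary's final candidate or the backup's final ``stop-and-check'' token commits first uniquely determines the outcome. Concretely, the backup's token performs two bundled atomic operations at each outer-boundary particle: disabling the primary's subsequent execution there, and reading the leader bit. If the primary ever commits before the token has traversed the entire outer boundary, then at some particle the token will read the leader bit as already set and the backup candidate will withdraw; otherwise the token disables the primary at every outer-boundary particle strictly before the primary could commit, so only the backup's self-declaration can occur. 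A short case analysis on the linearized interleaving rules out a double declaration, and together with parts (i) and (ii) yields the stated theorem.
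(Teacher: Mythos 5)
Your part (ii) is essentially the paper's own argument (a constant lower bound on the probability that each coin-flip phase eliminates a candidate, hence almost-sure convergence to a single candidate on the outer boundary, followed by solitude verification and boundary identification), and your part (iii) makes explicit a mutual-exclusion argument that the paper leaves implicit in its case analysis. The genuine gap is in part (i). You claim the backup algorithm ``writes only to separate memory fields'' and therefore ``does not impede the primary,'' but by construction it can: once the backup's sole surviving candidate passes solitude verification and boundary identification on the outer boundary, it sends a token that \emph{stops} the execution of the original algorithm at every particle on that boundary. Your own part (iii) relies on exactly this scenario (``the token disables the primary at every outer-boundary particle strictly before the primary could commit''), and in that scenario the primary never declares a leader, so Theorem~\ref{thm:runningtime} tells you nothing about when the composite algorithm produces one. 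Part (i) as written is therefore inconsistent with part (iii) and does not establish the $\mathcal{O}(L)$ bound.

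The missing step is the paper's handling of this preemption case: if the backup's token halts the primary before the primary declares a leader, then (in the w.h.p.\ event of Theorem~\ref{thm:runningtime}) the token must already have been \emph{created} within $\mathcal{O}(L)$ rounds, since otherwise the primary would have committed by then; the token needs at most $L$ further rounds to traverse the outer boundary, after which the backup's candidate declares itself leader. Hence in every w.h.p.\ branch --- primary declares, or backup preempts and declares --- a leader emerges within $\mathcal{O}(L)$ rounds, and in the remaining low-probability branch the backup still succeeds almost surely by your part (ii). One further minor omission in part (ii): you should note, as the paper does, that the last remaining candidate is its own predecessor and successor and therefore never withdraws during a coin-flip phase, so the candidate count cannot drop to zero. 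With these repairs the proof goes through.
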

\begin{proof}
Consider the execution of the second algorithm on the outer boundary.
As long as there is more than one candidate on the boundary, the first phase of the algorithm reduces the number of candidates with a probability that is lower bounded by a constant.
Furthermore, the last remaining candidate on the boundary competes with itself and will therefore never withdraw its candidacy during the first phase of the algorithm.
Thus, it holds almost surely that eventually only a single candidate remains on the outer boundary.

The last remaining candidate on the outer boundary eventually passes the solitude verification phase and the boundary identification phase.
It then interacts with the original leader election algorithm by sending a token along the outer boundary, which produces one of three results: (1) if the original algorithm establishes a leader, the second algorithm does not produce a leader; (2) if the original algorithm establishes a unique candidate with the highest identifier on the outer boundary but the particle corresponding to that candidate is reached by the aforementioned token before it claims leadership, only the second algorithm establishes a leader; and (3) if the original algorithm fails to establish a unique candidate with the highest identifier on the outer boundary, the second algorithm establishes a leader.

With high probability either the first or second case holds according to Theorem~\ref{thm:correctness}.
In the first case, the leader is established by the original algorithm in $\mathcal{O}(L)$ rounds by Theorem~\ref{thm:runningtime}.
In the second case, the second algorithm stops the execution of the original algorithm before it establishes a leader.
Since the original algorithm establishes a leader in $\mathcal{O}(L)$ rounds, the aforementioned token must have been created in $\mathcal{O}(L)$ rounds.
The token requires at most $L$ rounds to traverse the boundary and, therefore, the second algorithm establishes a leader in $\mathcal{O}(L)$ rounds.
Finally, in the third case the second algorithm almost surely establishes a leader eventually.
\qed
\end{proof}

\subsection{General Graphs} \label{subapp:generalgraphs}

In the amoebot model, the particles occupy the nodes of the infinite triangular grid graph $\Geqt$.
The graph $\Geqt$ is embedded in the Euclidean plane, which provides the particles with geometric information.
Specifically, the particles can measure distance because each pair of nodes connected by an edge is separated by a unit distance, and they can measure angles since they know the rotational order of neighboring nodes through their port labels and each face in the graph is an equilateral triangle.
The leader election algorithm presented in Section~\ref{sec:algo} explicitly uses this geometric information in the boundary setup, solitude verification, and boundary identification phases.
A natural question is whether there is an algorithm that achieves leader election for an arbitrary graph $G$ and without geometric information.
To formally investigate this question, we assume that we are given a set of contracted particles occupying a connected subset of the nodes of a graph $G$ of constant degree $d$ and the ports of the particles are labeled arbitrarily with numbers from $1$ to $d$.\footnote{This variant of the model was presented in~\cite{DNA} as the \emph{general amoebot model} while the model presented in this work was referred to as the \emph{geometric amoebot model}.}

Let $G$ be a ring of $n$ nodes and let each node be occupied by a contracted particle.
As before, to solve the leader election algorithm a single particle must irreversibly declare itself the leader and no other particle may ever declare itself to be the leader.
Since every node is occupied by a contracted particle, the particles cannot move.
Thus, the particles have to elect a leader using communication only, without geometric information.
Therefore, the particle system forms an undirected ring of anonymous nodes where each node is a probabilistic finite automaton.

We say a leader election algorithm fails if it either does not establish a leader or it establishes more that one leader.
The problem of electing a leader in a ring of anonymous nodes has already been investigated by Itai and Rodeh~\cite{IR90}.
Their results imply that for every $\rho < 1$, the failure probability of an algorithm for leader election on rings of arbitrary size is greater than $\rho$, i.e., the error probability cannot be bounded away from $1$.
Therefore, the leader election problem is infeasible on arbitrary graphs without geometric information.

\end{document}